\newcommand{\microspace}{\mspace{0.5mu}}
\declaretheorem[
]{theorem}
\declaretheorem[
]{lemma}
\declaretheorem[
]{corollary}
\def\<{\langle}
\def\>{\rangle}
\def \lket {\left|}
\def \rket {\right\rangle}
\def \lbra {\left\langle}
\def \rbra {\right|}
\newcommand{\ket}[1]{\lket\microspace #1 \microspace\rket}
\newcommand{\bra}[1]{\lbra\microspace #1 \microspace\rbra}
\newcommand{\Tr}{\text{Tr}}
\DeclarePairedDelimiter{\ceil}{\lceil}{\rceil}
\let\vec\bm 
\begin{document}
\title{Simultaneous Stoquasticity}

\author{Jacob~Bringewatt}
\email{jbringew@umd.edu}
\affiliation{Joint Center for Quantum Information and Computer Science, NIST/University of Maryland, College Park, Maryland 20742, USA}
\affiliation{Joint Quantum Institute, NIST/University of Maryland, College Park, Maryland 20742, USA}

\author{Lucas~T.~Brady}
\email{lucas.t.brady@nasa.gov}
\affiliation{Joint Center for Quantum Information and Computer Science, NIST/University of Maryland, College Park, Maryland 20742, USA}
\affiliation{Joint Quantum Institute, NIST/University of Maryland, College Park, Maryland 20742, USA}
\affiliation{Quantum Artificial Intelligence Laboratory, NASA Ames Research Center, Moffett Field, California 94035, USA}
\affiliation{KBR, 601 Jefferson St., Houston, TX 77002, USA}

\date{\today}
\begin{abstract}
Stoquastic Hamiltonians play a role in the computational complexity of the local Hamiltonian problem as well as the study of classical simulability. In particular, stoquastic Hamiltonians can be straightforwardly simulated using Monte Carlo techniques.  
We address the question of whether two or more Hamiltonians may be made simultaneously stoquastic via a unitary transformation. 
This question has important implications for the complexity of simulating quantum annealing where quantum advantage is related to the stoquasticity of the Hamiltonians involved in the anneal. 
We find that for almost all problems no such unitary exists and show that the problem of determining the existence of such a unitary is equivalent to identifying if there is a solution to a system of polynomial (in)equalities in the matrix elements of the initial and transformed Hamiltonians. Solving such a system of equations is NP-hard. We highlight a geometric understanding of this problem in terms of a collection of generalized Bloch vectors.
\end{abstract}

\maketitle

\section{Introduction} The efficient simulation of quantum phenomena is essential to understanding chemistry, materials, and physics, and similarly the lack of efficient classical simulation is critical to the long-term applicability of quantum computing. One of the key properties that can make a Hamiltonian easy to simulate classically is stoquasticity \cite{Bravyi2008}, a basis dependent property where the off-diagonal matrix elements are real and non-positive \footnote{In the mathematics literature, such matrices are called Z-matrices or negative Metzler matrices.}. Such stoquastic Hamiltonians do not suffer from the sign problem allowing classical simulation of their ground state properties via Monte Carlo techniques \cite{barker1979montecarlo,Crosson2021stoq}.

Stoquastic Hamiltonians have been especially important in the development of quantum annealing \cite{kadowaki1998} and quantum adiabatic computation \cite{farhi2000quantum}. Adiabatic quantum computing is quantum universal \cite{Aharonov2007}, but the proof relies on non-stoquastic Hamiltonians. There is growing evidence that adiabatic computing with stoquastic Hamiltonians is no more powerful than classical computing \cite{Brady2016,Crosson2016,Jiang2017,Crosson2021stoq,Bringewatt2020} except in contrived highly non-local settings \cite{Hastings2020}. In complexity theory, stoquastic Hamiltonians appear in the definition of the complexity class \texttt{StoqMA}, which characterizes the computational hardness of the local Hamiltonian problem for stoquastic Hamiltonians \cite{bravyi2006merlinarthur}.

A large body of literature has been built up around the problem of finding stoquastic \cite{Li_2015,hen_resolution_2019} or nearly stoquastic \cite{wan2020mitigating,hangleiter_easing_2020} bases for Hamiltonians. The corresponding unitary basis change is said to ``cure'' the non-stoquastic Hamiltonian. While the existence of such a basis is guaranteed by the diagonalizability of Hermitian matrices (the locality of such a basis is not guaranteed), finding such a basis change is an NP-hard problem \cite{troyer_computational_2005,marvian_computational_2019,Klassen2019}. However, this literature has mostly focused on just curing a single Hamiltonian's sign problem. In order to run simulated quantum annealing \cite{hastings2013obstructions,Crosson2016} or otherwise simulate the behavior of adiabatic computation, both annealing Hamiltonians must be stoquastic. This raises the question not just of how to find a basis in which two Hamiltonians are simultaneously stoquastic but further whether such a basis even exists. Our work along this direction is complementary to the results in Ref.~\cite{marvian_computational_2019} where the authors consider the problem of stoquasticizing a local Hamiltonian consisting of a sum of local terms and showed that this Hamiltonian can be stoquasticized if and only if all terms can be simultaneously stoquasticized. Furthermore, they showed that it is NP-hard to find a basis that accomplishes this. 

To formally state our problem of interest: let $\texttt{Stoq}$ be the set of all stoquastic matrices. Given a set of Hamiltonians $S=\{H_1, H_2, \cdots H_m\}$ defined on a $d$-dimensional Hilbert space $\mathcal{H}_d$, does there exist a single unitary $U$ that simultaneously cures the non-stoquasticity of (``stoquasticizes'') all $H_j\in S$; that is, $\exists$ $U$ such that $UH_jU^\dagger\in \texttt{Stoq}$ for all $j$?

Using the mathematical theory of simultaneous unitary similarities ~\cite{specht1940theorie, wiegmann1961necessary,gerasimova2013simultaneous, jing2015unitary}, we find that  the problem reduces to determining if there exists a solution to a system of polynomial (in)equalities. For $m>2$ and/or $d>2$, the resulting system of polynomial equations does not generically have a solution, and therefore a simultaneously stoquasticizing unitary does not always exist. In fact, we show almost every set $S$ of Hamiltonians is not simultaneously stoquastizable. By considering a generalized Bloch vector representation, we can geometrically interpret our results, connecting to the literature on the geometry of quantum states \cite{harriman1978geometry, hioe1981nlevel, byrd2003characterization, kimura2003bloch, kimura2005bloch, bertlmann2008bloch}.

This result has broad implications for adiabatic quantum computing, where annealing between two Hamiltonians that are not simultaneously stoquasticizable should be hard to simulate classically, independent of basis. 
The more general theory of simultaneous transformation of two or more Hermitian operators plays a key role in other areas of quantum physics, the most obvious being simultaneous diagonalizability governing the commutativity and compatability of observables. Similarly simultaneous unitary congruence has been used to show that quantum seperability is connected to the simultaneous hollowability of matrices \cite{neven2018quantum}, and simultaneous orthogonal equivalence connects to local unitary equivalence of a pair of quantum states \cite{jing2016}.

\section{Lie algebras and Lie groups}Formally, any Hamiltonian $H\in\mathbb{C}^{d\times d}$ is (up to a physically irrelevant shift by a multiple of identity) an element of the Lie algebra $\mathfrak{su}(d)$. Here, we take the usual physicist convention that $\mathfrak{su}(d)$ consists of the set of all $d\times d$ traceless, Hermitian matrices. This is known as the fundamental representation. The Lie algebra $\mathfrak{su}(d)$ has real dimension $d^2-1$ and, therefore, any element of the Lie algebra may be expanded in a basis of $d^2-1$ elements of the algebra, which we choose to obey the standard orthonormality condition
\begin{align}\label{eq:normalization_su}
    \mathrm{Tr}(\hat{\lambda}_i\hat{\lambda}_j)=2\delta_{ij}.
\end{align}
We also have that
\begin{align}\label{eq:basisproduct}
    \hat{\lambda}_i\hat{\lambda}_j=\frac{2}{d}\delta_{ij} I+if_{ijk}\hat{\lambda}_k+d_{ijk}\hat{\lambda}_k,
\end{align}
where $I$ is the identity matrix and $f_{ijk}$ and $d_{ijk}$ are the totally anti-symmetric and symmetric structure constants, respectively.  We use the convention of summing over repeated indices. 
 A standard choice of basis is the generalized Gell-Mann basis. It is made up of $d(d-1)/2$ symmetric matrices,
\begin{subequations}
\begin{equation}\label{eq:gmx}
    \hat\lambda_{jk}^{(x)}=\ket{j}\bra{k}+\ket{k}\bra{j},\quad (1\leq j<k\leq d),
\end{equation}
$d(d-1)/2$ skew-symmetric matrices,
\begin{equation}\label{eq:gmy}
    \hat\lambda_{jk}^{(y)}=-i\ket{j}\bra{k}+i\ket{k}\bra{j},\quad (1\leq j<k\leq d),
\end{equation}
and $d-1$ diagonal matrices,
\begin{equation}\label{eq:gmd}
    \hat\lambda_{j}^{(\mathrm{diag})}=\sqrt{\frac{2}{j(j+1)}}\mathrm{diag}(\underbrace{1,\cdots, 1}_{j}, -j, 0, \cdots, 0),
\end{equation}
\end{subequations}
where in this final equation we have $j\in\{1,\cdots, d-1\}$. For $\mathfrak{su}(2)$, the generators defined in this way are the familiar Pauli operators, which motivates the $x,y$ superscipts for the symmetric and skew-symmetric generalized Gell-Mann matrices, respectively. For $\mathfrak{su}(3)$, they are the Gell-Mann matrices. 

We can write any traceless Hamiltonian $H$ in this basis as
\begin{align}\label{eq:blochform}
    H=\vec b\cdot\hat{\vec \lambda},
\end{align}
where $\hat{\vec\lambda}$ is a vector of basis elements and $\vec{b}\in\mathbb{R}^{d^2-1}$ is the so-called (generalized) \textit{Bloch vector} corresponding to $H$. We consider grouping the components of $\hat{\vec\lambda}$ into subsets matching the basis elements defined in Eqs.~(\ref{eq:gmx})-(\ref{eq:gmd}) as follows: let $\mathcal{X}$, $\mathcal{Y}$ and $\mathcal{D}$ be the sets of indices corresponding to the symmetric, skew-symmetric, and diagonal generalized Gell-Mann matrices, respectively. We have $\{\mathcal{X}, \mathcal{Y}, \mathcal{D}\}=\{1,\cdots, d^2-1\}$.

Therefore, there exists an isomorphism $S\cong B$ between a set $S=\{H_1, H_2, \cdots, H_m\}$ of traceless Hamiltonians and a set of corresponding Bloch vectors $B=\{\vec b^{(1)}, \vec b^{(2)}, \cdots \vec b^{(m)}\}$. These Bloch vectors will simplify a number of proofs and provide a valuable geometric interpretation of our results.

\section{Simultaneous stoquasticity}Given the set $S=\{H_1, H_2, \cdots, H_m\}$ we want to solve the decision problem: Does there exist a unitary $U$ such that $H_j'=UH_jU^\dagger\in\texttt{Stoq}$ for all $H_j'\in S':=\{UH_1U^\dagger, UH_2U^\dagger, \cdots, UH_mU^\dagger\}$? 

Observe that the choice of trace of the Hamiltonians and of the Hermitian generators of $U$ can be chosen to be zero with no physical consequence. Therefore, without loss of generality, we restrict our consideration to traceless $H_j\in S$ and to special unitaries $U\in SU(d)$. 

This assumption allows us to directly describe the problem in terms of Bloch vectors as detailed in the previous section. In particular, consider the sets of Bloch vectors $B\cong S$ and $B'\cong S'$. In the space of Bloch vectors $\texttt{Stoq}$ corresponds to the subset of Bloch vectors such that $b_j=0$ for $j\in\mathcal{Y}$ and $b_j\leq 0$ for $j\in\mathcal{X}$. The decision problem is now that of finding whether there exists a unitary $U$ such that the vectors $\vec b'\in B'$ all fall in this subspace. 

When $d=2$, the Bloch space is of dimension $d^2-1=3$ and $\texttt{Stoq}$ is easily visualizable as the $\pm \hat{\sigma}^{(z)}, -\hat{\sigma}^{(x)}$ half-plane, as depicted in Fig.~\ref{fig:bloch2d}. In this case, it is well-known that $SU(2)$ is a double-cover of $SO(3)$, and therefore we can visualize the action of unitaries on $S$ as rotations of the collection of vectors $B$. It is simple to observe that the answer to our decision problem is yes if and only if the vectors $B$ all lie in a single half-plane. This plane can then be rotated via some unitary to align with the $\texttt{Stoq}$ half-plane. This is always possible if $m\leq 2$ and $d=2$.

\begin{figure}
    \centering
    \includegraphics[width=0.5\columnwidth]{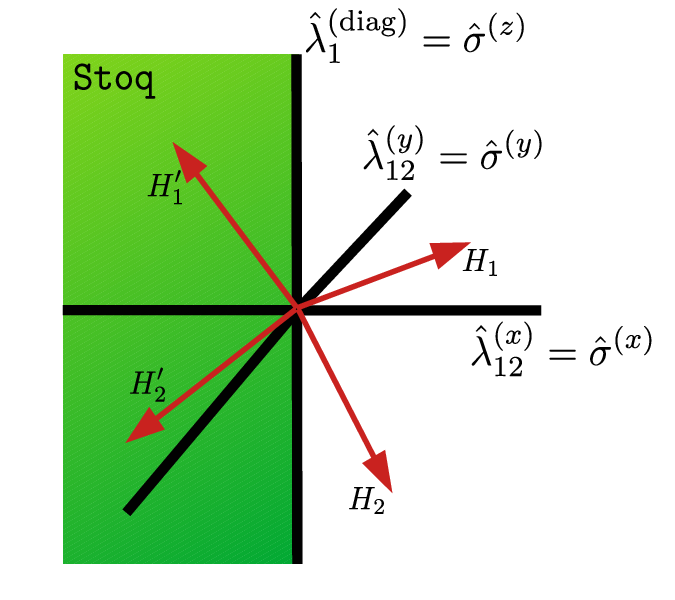}
    \caption{For a qubit, the geometric representation  of $\texttt{Stoq}$ in Bloch vector space is the $\pm \hat{\sigma}^{(z)}, -\hat{\sigma}^{(x)}$ half-plane. Observe that for two Hamiltonians $H_1, H_2$, represented by their Bloch vectors, there always exists a unitary which can simultaneously take both $H_1', H_2'$ to the $\texttt{Stoq}$ subspace. }
    \label{fig:bloch2d}
\end{figure}

We seek to generalize and formalize this geometric intuition for $d>2$. We will make use of the mathematical theory of simultaneous unitary similarities~\cite{specht1940theorie, wiegmann1961necessary, gerasimova2013simultaneous, horn2012matrix} and the related theory of simultaneous invariants \cite{procesi1976invariant}. Two ordered sets of $m$ matrices $S, S'$ are \emph{simultaneously unitarily similar} if there exists a unitary $U$ such that $H_j'= UH_jU^\dagger$ for all $H_j\in S$, $H_j'\in S'$.  In this terminology our goal is to determine if the set $S$ of Hamiltonians is simultaneously unitarily similar to a set $S'\in\texttt{Stoq}$. 

Define a \emph{word} on a set $T$ as any formal product of nonnegative powers of the elements $t_j\in T$. We then have the following theorem due to Ref.~\cite{gerasimova2013simultaneous}.
\begin{theorem}\label{thm:simsim}
The ordered sets of Hermitian matrices $S=\{H_1, \cdots, H_m\}$ and  $S'=\{H_1',\cdots, H_m'\}$ are simultaneously unitarily similar if and only if $\mathrm{Tr}[ w(S)]=\mathrm{Tr}[w(S')]$ for all words $w$ in $S,S'$.
\end{theorem}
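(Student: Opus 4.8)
\emph{Proof strategy.} The forward direction is immediate: if $H_j'=UH_jU^\dagger$ for all $j$, then since conjugation by $U$ is an algebra homomorphism one has $w(S')=U\,w(S)\,U^\dagger$ for every word $w$, and cyclicity of the trace gives $\mathrm{Tr}[w(S')]=\mathrm{Tr}[w(S)]$. For the converse I would recognize the hypothesis as the statement that two finite-dimensional $*$-representations share the same character, and then appeal to the structure theory of semisimple algebras.

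In detail, let $\mathcal{A},\mathcal{A}'\subseteq\mathbb{C}^{d\times d}$ be the unital algebras generated by $S$ and $S'$. Since the $H_j$ and $H_j'$ are Hermitian and words reverse under the adjoint, $\mathcal{A}$ and $\mathcal{A}'$ are $*$-subalgebras of $\mathbb{C}^{d\times d}$, hence finite-dimensional $C^*$-algebras; in particular they are semisimple and the trace form $(X,Y)\mapsto\mathrm{Tr}[XY]$ is non-degenerate on each. Let $\rho,\rho'$ be the $*$-homomorphisms from the free $*$-algebra $\mathbb{C}\langle x_1,\dots,x_m\rangle$ (involution reversing words and fixing generators) onto $\mathcal{A},\mathcal{A}'$ with $x_j\mapsto H_j,H_j'$; extending the hypothesis linearly from words, $\mathrm{Tr}[\rho(a)]=\mathrm{Tr}[\rho'(a)]$ for all $a$. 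The crucial step is to show $\ker\rho=\ker\rho'$: if $\rho(w)=0$ then $\mathrm{Tr}[\rho'(w_1)\rho'(w)]=\mathrm{Tr}[\rho(w_1w)]=0$ for every word $w_1$; as the $\rho'(w_1)$ span $\mathcal{A}'$, non-degeneracy of the trace form forces $\rho'(w)=0$, and symmetrically. Thus there is a canonical $*$-isomorphism $\phi:\mathcal{A}\to\mathcal{A}'$ with $\phi(H_j)=H_j'$, and the character identity makes it trace-preserving, $\mathrm{Tr}[\phi(X)]=\mathrm{Tr}[X]$ for all $X\in\mathcal{A}$.

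It then remains to realize $\phi$ as conjugation by a unitary. Writing $\mathcal{A}\cong\bigoplus_k\mathbb{C}^{n_k\times n_k}$, faithfulness of the action on $\mathbb{C}^d$ yields a decomposition $\mathbb{C}^d\cong\bigoplus_k V_k^{\oplus m_k}$ with all $m_k\geq 1$, and the same block decomposition for $\mathcal{A}'$ acting on $\mathbb{C}^d$ via $\phi$. A minimal projection $p_k$ in the $k$-th block of $\mathcal{A}$ satisfies $\mathrm{Tr}[p_k]=m_k$, while $\phi(p_k)$ is a minimal projection in the $k$-th block of $\mathcal{A}'$ with $\mathrm{Tr}[\phi(p_k)]=m_k$ by trace preservation; hence the two $*$-representations of $\mathcal{A}$ on $\mathbb{C}^d$ (the inclusion, versus $\phi$ followed by the inclusion of $\mathcal{A}'$) carry equal multiplicities and are therefore unitarily equivalent. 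The intertwiner $U$ satisfies $UH_jU^\dagger=\phi(H_j)=H_j'$, which is what we want.

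The main obstacle is the middle paragraph — converting equality of all word traces into an honest isomorphism of the generated $*$-algebras that respects $H_j\mapsto H_j'$; semisimplicity of $*$-subalgebras of $\mathbb{C}^{d\times d}$ and non-degeneracy of the restricted trace form are doing the real work there, after which the unitary is produced by the uniqueness of $*$-representations of a finite-dimensional $C^*$-algebra. I would also note that since $\mathcal{A}$ is spanned by words of length below $d^2$, only finitely many words — with a length bound depending only on $d$ — actually need to be tested, which is what renders the criterion effective for the subsequent reduction to a system of polynomial (in)equalities.
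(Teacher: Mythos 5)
Your argument is correct, but it takes a genuinely different route from the one the paper relies on. The paper does not prove Theorem~\ref{thm:simsim} directly: it invokes the result of Gerasimova, Horn, and Sergeichuk, whose proof (sketched in the paper's appendix on word-length bounds) reduces simultaneous unitary similarity of the sets $S,S'$ to Specht's word-trace criterion for a single pair of complex matrices, via an explicit block-matrix encoding of the $m$ Hamiltonians into two larger $cd\times cd$ matrices. That route has the practical advantage that it immediately yields the quantitative Paz/Pappacena word-length bound $\ell_{\mathrm{max}}$ of Eq.~(\ref{eq:lmax}), which the paper needs to reduce the decision problem to finitely many polynomial (in)equalities. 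Your proof is instead a self-contained representation-theoretic one: equality of all word traces, together with non-degeneracy of the trace form on the $*$-subalgebras $\mathcal{A},\mathcal{A}'$ generated by $S,S'$, forces the kernels of the two free-algebra representations to coincide, producing a trace-preserving $*$-isomorphism $\phi$ with $\phi(H_j)=H_j'$; trace preservation on minimal projections then matches the multiplicities of the two representations of $\mathcal{A}$ on $\mathbb{C}^d$, and uniqueness of finite-dimensional $*$-representations up to unitary equivalence supplies the intertwining unitary. This buys conceptual clarity and avoids the encoding trick, at the cost of the effective bound: your closing remark that words of length below $d^2$ span $\mathcal{A}$ is only a crude spanning bound, and as stated it glosses over the fact that the kernel argument tests traces of products of two such spanning words (so the naive length bound roughly doubles); the sharper $O((cd)^{3/2})$ bound actually used in the paper comes from the block-encoding construction, not from your argument. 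Since the theorem as stated is the ``all words'' version, this does not affect the correctness of your proof, only its ability to replace the appendix's derivation of $\ell_{\mathrm{max}}$.
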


The quantities $\mathrm{Tr}[w(S)]$ are known as \emph{trace invariants} under simultaneous unitary similarity. Unfortunately, Theorem~\ref{thm:simsim} is a practically useless condition since it requires the checking of all words in $S$ and $S'$. To get around this issue, one must demonstrate that only a finite set of independent words exist~\cite{paz1984application, pappacena1997upper, djokovic2007unitarily, horn2012matrix}. 

When checking the unitary similarity of a single pair of Hermitian matrices $H, H'$, it is often quoted in the physics literature (typically in the context of density matrices) that it is sufficient to check the equivalence of the trace invariants $\mathrm{Tr}[H^k]$ and $\mathrm{Tr}[H'^k]$ for $k\in[1,d]$ \cite{hioe1981nlevel, schlienz1996maximal, mahlerquantum, geometryquantum}. While perhaps intuitively obvious---as Hermitian matrices have $d$ real eigenvalues---this is typically stated without proof. For completeness, we give such a proof in the Appendix. 

More generally, for $m\geq 2$, we can show that it is sufficient to consider word lengths up to
\begin{equation}
    \label{eq:lmax}
    \ell_{\mathrm{max}}=\mathrm{min}\begin{cases}
    \ceil{\frac{(cd)^2+2}{3}}\\
    cd\sqrt{\frac{2(cd)^2}{cd-1}+\frac{1}{4}}+\frac{cd}{2}-2.
    \end{cases}
\end{equation}
where $c$ is the minimum integer such that $(c^2-3c+2)/2\geq m$. For instance, if $m=2$, $c=4$. The proof mostly follows Refs.~\cite{gerasimova2013simultaneous, paz1984application, pappacena1997upper}, and the derivation of this expression is demonstrated in the Appendix. 

Therefore, the decision problem of whether there exists a unitary that simultaneously stoquasticizes $S$ is equivalent to determining if there exists a solution to the system of polynomial (in)equalities in the matrix elements of $H'\in S'$:
\begin{subequations}
\begin{align}
    &\mathrm{Tr}[w(S)]=\mathrm{Tr}[w(S')], \quad &\forall \, |w|\leq \ell_{\mathrm{max}} \label{eq:polysys1}\\
    &\mathrm{Re}(H'_{jk})\leq 0, \quad &\forall j\neq k, H'\in S' \label{eq:polysys2}\\
    &\mathrm{Im}(H'_{jk})= 0, \quad &\forall j\neq k, H'\in S' \label{eq:polysys3}.
\end{align}
\end{subequations}

This amounts to a system of $\sum_{\ell=2}^{\ell_\mathrm{max}} m^\ell+md(d-1)/2\sim m^{O((cd)^{3/2})}$ polynomial equations and $md(d-1)/2$ inequality constraints on $md^2$ real variables. Many of the equations for different words in Eq.~(\ref{eq:polysys1}) will end up being redundant due to symmetries such as the cyclicity of the trace and algebraic dependence of the resulting trace invariants. 
Independent of if one can identify the minimal set of such constraints, solving the decision problem of whether or not a solution exists to this set of polynomial (in)equalitites is NP-hard and lies in PSPACE~\cite{canny1988some}. Therefore, identifying if $S$ is simultaneously stoquasticizable is completely intractable for large problem instances.

\section{A no-go result}
Given this computational difficulty, we also present the following no-go result. 
\begin{theorem}\label{thm:nogo}
A necessary condition for $S$ to be simultaneously stoquasticizable is that for every eigenvalue $\lambda\neq 0$ of $i[H_i,H_j]$ there is another eigenvalue $-\lambda$ of $i[H_i,H_j]$ (paired eigenvalue condition) for all $H_i\neq H_j\in S$. 
\end{theorem}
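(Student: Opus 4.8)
The plan is to exploit the observation that a stoquastic matrix is more than Hermitian: since its off-diagonal entries are real (and nonpositive) and its diagonal entries are real by Hermiticity, every element of $\texttt{Stoq}$ is in fact a real \emph{symmetric} matrix, and only its realness will matter here. Suppose $S$ is simultaneously stoquasticizable, so that there is a $U\in SU(d)$ with $H_k':=UH_kU^\dagger\in\texttt{Stoq}$ for all $k$, and fix a pair $H_i\neq H_j\in S$. Conjugation by $U$ is a similarity transformation and $[H_i',H_j']=U[H_i,H_j]U^\dagger$, so $i[H_i,H_j]$ and $i[H_i',H_j']$ have identical spectra including multiplicities; hence it suffices to establish the paired-eigenvalue property for the Hermitian matrix $i[H_i',H_j']$.

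Next I would note that $M:=[H_i',H_j']=H_i'H_j'-H_j'H_i'$ is real, being a difference of products of real matrices, and antisymmetric: using $H_i'^{\,T}=H_i'$ and $H_j'^{\,T}=H_j'$ gives $M^{T}=H_j'H_i'-H_i'H_j'=-M$. Then I would invoke the elementary fact that the characteristic polynomial $p_M(\lambda)=\det(\lambda I-M)$ of a real antisymmetric matrix satisfies $p_M(-\lambda)=(-1)^d\,p_M(\lambda)$: indeed $\det(\lambda I+M)=\det\!\big((\lambda I+M)^{T}\big)=\det(\lambda I-M)=p_M(\lambda)$, whence $p_M(-\lambda)=\det(-\lambda I-M)=(-1)^d\det(\lambda I+M)=(-1)^d p_M(\lambda)$. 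Thus the nonzero roots of $p_M$ (which are in fact purely imaginary) occur in pairs $\{\mu,-\mu\}$ with equal multiplicity, and since the eigenvalues of $i[H_i',H_j']=iM$ are exactly $i$ times those of $M$, the nonzero eigenvalues of $i[H_i,H_j]$ likewise come in pairs $\{\lambda,-\lambda\}$. As $i$ and $j$ were arbitrary, this is the asserted necessary condition.

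I do not expect a genuine obstacle: once one recognizes $\texttt{Stoq}\subseteq\{\text{real symmetric matrices}\}$, the rest is standard linear algebra, and the only point requiring mild care is that the pairing should be understood with multiplicities — which the characteristic-polynomial argument supplies automatically. As a complement I would also remark that the same parity of $p_M$ implies $\mathrm{Tr}\big[(i[H_i,H_j])^{2k+1}\big]=0$ for all $k\geq 0$, so Theorem~\ref{thm:nogo} can be read as packaging an infinite family of the word-trace constraints in Eq.~(\ref{eq:polysys1}) into a single spectral statement, and I would close by reconciling this with the Bloch-vector picture: real symmetry of $H_i',H_j'$ forces their Bloch vectors into the coordinate subspace $\{b_j=0:j\in\mathcal{Y}\}$, on which the commutator lies in the span of the skew-symmetric generators, and it is precisely this structure that enforces the $\pm\lambda$ symmetry of the spectrum of $i[H_i',H_j']$.
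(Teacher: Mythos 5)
Your proof is correct and follows essentially the same route as the paper's: stoquastic matrices are real, so the conjugated commutator is (real) skew-symmetric, its nonzero eigenvalues come in $\pm\lambda$ pairs, and unitary invariance of the spectrum transfers this to $i[H_i,H_j]$. The only difference is that you spell out the characteristic-polynomial parity argument (with multiplicities) that the paper simply invokes as a known property of skew-symmetric matrices.
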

\begin{proof}
Any two matrices $H_i', H_j'\in\texttt{Stoq}$ have all real matrix elements. Therefore, the Hermitian matrix $C'=i[H_i', H_j']$ must be skew-symmetric. Skew-symmetric matrices have the property that all eigenvalues are paired. As eigenvalues remain unchanged under action by a unitary and if we act on $H_i', H_j'$ by a unitary, $C'$ changes equivalently, this paired property must exist for any simultaneously stoquastizable $H_i, H_j$. This holds for all pairs of Hamiltonians in $S$.
\end{proof}
This theorem provides a straightforward condition to rule out if $S$ is simultaneously stoquasticizable. However, the presence of paired eigenvalues does not guarantee simultaneous stoquasticity as: (a) stoquastic matrices must have negative, as well as real, off-diagonal elements; (b) it is possible for simultaneously non-stoquastic Hamiltonians to have a commutator with paired eigenvalues.

This condition also relates to the dynamical Lie algebra from quantum control theory~\cite{Jurdjevic1972,Ramakrishna1995} which for simultaneously real Hamiltonians (in any basis) neatly breaks up into a Cartan decomposition with every other layer of the dynamical Lie algebra (i.e. nested commutators with even numbers of our original Hamiltonians) corresponding to purely imaginary Hamiltonians (transformed into a given basis).  Therefore, finding a basis in which a set of Hamiltonians are simultaneously real, a necessary condition for simultaneous stoquasticization, is equivalent to identifying whether there is a Cartan decomposition of $\mathfrak{su}(d) = \mathfrak{p}\oplus \mathfrak{so}(d)$ where the set of Hamiltonians is contained in $\mathfrak{p}$.

\section{Bloch vector approach} We now reexpress the trace invariants in terms of Bloch vectors. This provides a geometric interpretation that neatly connects back to the intuition from the one qubit example given earlier, while highlighting a number of symmetries between words that are less clear in the alternative formalism. This approach will also allow us to prove another no-go result, from which the following theorem establishing the rareness of simultaneous stoquasticity immediately follows.

\begin{theorem}\label{thm:rarestoq}
For almost every $S$ with $m\geq 2$, $d\geq 3$, $S$ is not simultaneously stoquasticizable.
\end{theorem}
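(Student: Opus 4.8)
The plan is to produce a single polynomial function of the entries of $S$ that (i) is forced to vanish whenever $S$ is simultaneously stoquasticizable and (ii) is not identically zero on $(\mathfrak{su}(d))^m$ for $d\ge 3$. Theorem~\ref{thm:rarestoq} then follows because the vanishing locus of a nonzero polynomial on a finite-dimensional real vector space has Lebesgue measure zero (and is nowhere dense), so its complement is a set of full measure consisting entirely of non-stoquasticizable tuples. I will obtain such a function as the ``cubic commutator invariant'' that emerges from the Bloch-vector reformulation of the trace invariants.

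\textbf{Step 1: the invariant and the no-go statement.} For a pair $H_i\neq H_j\in S$ set $C_{ij}:=i[H_i,H_j]$, a traceless Hermitian matrix, and define $P_{ij}(S):=\mathrm{Tr}[C_{ij}^{\,3}]$. Conjugation is covariant, $U C_{ij}U^\dagger=i[UH_iU^\dagger,UH_jU^\dagger]$, so the spectrum of $C_{ij}$ — and hence $P_{ij}$ — is a simultaneous unitary invariant; equivalently, expanding $C_{ij}^{\,3}$ exhibits $P_{ij}$ as a fixed real combination of traces of length-$6$ words in $H_i,H_j$, so its invariance is an instance of Theorem~\ref{thm:simsim}. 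Next I claim $P_{ij}$ vanishes on $\texttt{Stoq}^m$: if $H_i',H_j'\in\texttt{Stoq}$ then both are real, so $[H_i',H_j']$ is real and antisymmetric, whence $C_{ij}'=i[H_i',H_j']$ is purely imaginary and Hermitian, so its eigenvalues occur in $\pm$ pairs together with zeros — the paired-spectrum phenomenon already isolated in Theorem~\ref{thm:nogo}, now applied to $C_{ij}$ itself — and therefore $\mathrm{Tr}[(C_{ij}')^3]=\sum_k\big(\mu_k^3+(-\mu_k)^3\big)=0$. In Bloch language this reads: for a stoquastic pair the commutator's Bloch vector, obtained from $\vec b^{(i)},\vec b^{(j)}$ by contraction with $f_{abc}$, is supported only on the skew-symmetric indices $\mathcal Y$, and the cubic $d$-symbol invariant $d_{abc}c_ac_bc_c$ vanishes on the $\mathcal Y$-subspace because $d_{abc}=0$ whenever all three indices are of $\mathcal Y$-type. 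Combining the two facts yields the no-go lemma: if $S$ is simultaneously stoquasticizable then $P_{ij}(S)=0$ for every $i\neq j$.

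\textbf{Step 2: genericity.} It remains to exhibit $S$ with $P_{12}(S)\neq 0$. For $d=3$ and traceless Hermitian $C$ one has $\mathrm{Tr}[C^3]=3\det C$, so it suffices to produce $H_1,H_2\in\mathfrak{su}(3)$ with invertible commutator; e.g.\ $H_1=\mathrm{diag}(1,0,-1)$ together with an $H_2$ having suitably generic complex off-diagonal entries makes $\det[H_1,H_2]\neq 0$ by a one-line $3\times3$ computation. For $d>3$ embed this pair in the top-left block and pad with zeros, and for $m>2$ adjoin arbitrary further Hamiltonians: $P_{12}$ is unchanged and still nonzero. Hence $P_{12}$ is a nonzero polynomial in the real and imaginary parts of the entries of $H_1,H_2$, so $\{S:P_{12}(S)=0\}$ is a proper real-algebraic subvariety of $(\mathfrak{su}(d))^m$, and therefore Lebesgue-null and nowhere dense; every $S$ in its complement violates the no-go lemma and so is not simultaneously stoquasticizable.

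\textbf{Main obstacle.} The crux is Step~1 — choosing an invariant that is simultaneously provably zero on all of $\texttt{Stoq}^m$ and provably not identically zero on $(\mathfrak{su}(d))^m$. It is worth noting that a soft argument does not suffice: dimension counting for the parametrization $(S',U)\mapsto U^\dagger S'U$ of the stoquasticizable set gives only $\dim\le m(d-1)(d+2)/2+(d^2-1)$, which is strictly below $\dim(\mathfrak{su}(d))^m=m(d^2-1)$ exactly when $m\ge 3$; at $m=2$ that count is vacuous, so one genuinely needs an algebraic obstruction valid uniformly in $m\ge 2$, and the cubic commutator invariant supplies it. The spectral computation in Step~1 and the ``nonzero polynomial $\Rightarrow$ measure zero'' conclusion in Step~2 are then routine.
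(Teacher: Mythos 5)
Your proof is correct, but it takes a genuinely different route from the paper. The paper proves the theorem via its Bloch-vector no-go result (Theorem~\ref{thm:nogobloch}): for stoquasticizable $S$ the span of all star products $\mathcal{B}$ is confined to the $(d^2+d-1)/2$-dimensional $\{\mathcal{X},\mathcal{D}\}$ subspace, while a separate argument (Lemma~\ref{lemma:diagonal}, Lemma~\ref{lemma:spanxy}, Theorem~\ref{thm:stardim}) shows that for almost every $S$ with $m\geq 2$, $d\geq 3$ the star products span the full $(d^2-1)$-dimensional Bloch space, giving a contradiction generically. You instead extract a single scalar obstruction, $P_{12}(S)=\mathrm{Tr}\bigl[(i[H_1,H_2])^3\bigr]$, which is in effect the cubic (odd power-sum) consequence of the paired-eigenvalue condition of Theorem~\ref{thm:nogo}: it is a simultaneous unitary invariant, it vanishes on stoquastic pairs because $i[H_1',H_2']$ is $i$ times a real antisymmetric matrix, and your $d=3$ example (correct: with $H_1=\mathrm{diag}(1,0,-1)$ one gets $\det(i[H_1,H_2])=-4\,\mathrm{Im}(h_{12}h_{23}\overline{h_{13}})$, nonzero for generic complex $H_2$) plus block embedding shows it is not identically zero on $(\mathfrak{su}(d))^m$ for $d\geq 3$; the standard fact that a nonzero polynomial has Lebesgue-null zero locus then finishes the argument, and your observation that a naive dimension count fails at $m=2$ correctly explains why an algebraic obstruction is needed. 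Your route is shorter and yields an explicit, easily evaluated polynomial certificate showing the stoquasticizable tuples lie in a proper real-algebraic subvariety (hence nowhere dense as well as null); the paper's route is heavier but buys more structure---the star-product span dimension is itself an invariant of independent interest, and the same machinery immediately gives the analogous rarity result for simultaneous diagonalizability in the appendix, which your single cubic invariant does not address (it vanishes identically on commuting pairs for trivial reasons, but the genericity step would have to be redone with a different invariant). One cosmetic slip: the expansion of $\mathrm{Tr}[C_{12}^3]$ into length-$6$ word traces has coefficients $\mp i$ rather than real ones (reality only emerges after pairing each word with its reversal), but nothing in your argument depends on that phrasing since invariance of the spectrum of $C_{12}$ under simultaneous conjugation is immediate.
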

``Almost every'' is used in the technical sense that the set of simultaneously stoquastizable $S$ are measure zero.

With this goal in mind, let us consider expressing the trace invariants from Theorem~\ref{thm:simsim} in terms of Bloch vectors. For words of arbitrary length $|w|$, we have 
\begin{equation}\label{eq:gentrtobloch}
   \mathrm{Tr}[w(S)]=\mathrm{Tr}\left[\prod_{j=1}^{|w|}\sum_{\mu_j=1}^{d^2-1}b^{(w_j)}_{\mu_j} \hat\lambda_{\mu_{j}}\right],
\end{equation}
where we have denoted the $j$-th element of $w$ as $w_j$. 

Now consider evaluating Eq.~(\ref{eq:gentrtobloch}) explicitly for words of small length. By our assumption of tracelessness, the trace invariant for any $w(S)$ of length one is zero. A general trace invariant for $|w|=2$ is
\begin{align}
    \mathrm{Tr}[H_iH_j]=\mathrm{Tr}[(\vec b^{(i)}\cdot\hat{\vec\lambda})(\vec b^{(j)}\cdot\hat{\vec\lambda})]=2\vec b^{(i)}\cdot \vec b^{(j)},
\end{align}
where we used Eq.~(\ref{eq:basisproduct}) to evaluate the trace. Therefore, the lengths (from the $i=j$ case) and relative angles (from the $i\neq j$ case) of the Bloch vectors corresponding to pairs of Hamiltonians in $S$ are simultaneous trace invariants.

This result is intuitively satisfying. For a qubit, 
recalling that $SU(2)$ is homomorphic to $SO(3)$, this is precisely what we would expect to be invariant for a rigid collection of vectors being rotated simultaneously about the origin. We should also expect this to be the only constraint for a qubit. This expectation is validated by computing the trace invariant for words of length three:
\begin{align}\label{eq:w3}
     \mathrm{Tr}[H_iH_jH_k]=2d_{\mu\nu\xi}b^{(i)}_\mu b^{(j)}_\nu b^{(k)}_\xi=2(\vec b^{(i)}*\vec b^{(j)})\cdot \vec b^{(k)},
\end{align}
where we have introduced the star product, defined component-wise, using the symmetric structure constants, as,
\begin{align}\label{eq:star}
    (\vec b^{(i)}*\vec b^{(j)})_\xi=d_{\mu\nu\xi}b^{(i)}_\mu b^{(j)}_\nu.
\end{align}

For $\mathfrak{su}(2)$, the symmetric structure constants are all zero, so, as expected for a qubit, words of length greater than two provide no further constraints. 

Various properties of the star product are detailed in the Appendix. In particular, observe that the star product is not associative and that Eq.~(\ref{eq:w3}) is completely symmetric in the input word. Similar observations allow us to show that any trace invariant can be written as $\vec v \cdot \vec b^{(i)}$ for some $i\in[1,m]$, where $\vec v$ is any vector in the set $\mathcal{B}$ of all possible combinations of star products between Bloch vectors in $B$. That is $\mathcal{B}=\{ \vec b^{(j)}, \vec b^{(j)}*\vec b^{(k)}, (\vec b^{(j)}*\vec b^{(k)})*\vec b^{(l)}, \cdots \}$. This can be verified by direct computation, but we provide explicit proof in the Appendix. 

Given this formalism, we can pick a finite set of Bloch trace invariants using Eq.~(\ref{eq:lmax}) and then construct an equivalent decision problem to Eqs.~(\ref{eq:polysys1})-(\ref{eq:polysys3}) to test for simultaneous stoquasticity. The stoquasticity conditions in this context are $b'^{(i)}_j=0$ $j\in\mathcal{Y}$ and $b'^{(i)}_j\leq 0$ for $j\in\mathcal{X}$ for all $i$.

We also obtain the following no-go result.
\begin{theorem}\label{thm:nogobloch}
Let  $S$ be a set of Hermitian matrices with corresponding Bloch vectors $B=\{\vec b^{(1)}, \vec b^{(2)}, \cdots \vec b^{(m)}\}$. Let $\mathcal{B}$ be the set of all possible star products between elements of $B$. A necessary condition for $S$ to be simultaneously stoquasticizable is that $\mathrm{dim}(\mathrm{span}(\mathcal{B}) )\leq(d^2+d-1)/2$. 
\end{theorem}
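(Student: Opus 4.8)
The plan is to show that $\dim(\mathrm{span}(\mathcal{B}))$ is itself conserved under simultaneous unitary similarity, and then to bound it directly once the Bloch vectors are stoquastic. First I would record how conjugation looks in Bloch coordinates: since $U\hat\lambda_\mu U^\dagger$ is again traceless and Hermitian, there is an orthogonal matrix $R$ with $U\hat\lambda_\mu U^\dagger=R_{\nu\mu}\hat\lambda_\nu$, so the Bloch vector of $UHU^\dagger$ is $R\vec b$, and by construction $R$ lies in the adjoint representation of $SU(d)$. The key structural fact is that the totally symmetric structure constants form an \emph{invariant tensor} of this action. Using Eq.~\eqref{eq:basisproduct} one has $d_{\mu\nu\xi}=\tfrac14\mathrm{Tr}(\{\hat\lambda_\mu,\hat\lambda_\nu\}\hat\lambda_\xi)$; conjugating all three generators by $U$ and using cyclicity of the trace leaves $d_{\mu\nu\xi}$ unchanged, while expanding each conjugated generator in the basis shows the same quantity equals the contraction of $d$ with three copies of $R$. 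Hence $d$ is $R$-invariant, and therefore the star product is covariant: $R(\vec u*\vec v)=(R\vec u)*(R\vec v)$ for every $R$ in the adjoint representation.

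Next, covariance together with an easy induction on the nesting depth of star products shows that the element of $\mathcal{B}$ built from the transformed set $B'=\{R\vec b^{(1)},\dots,R\vec b^{(m)}\}$ by any fixed sequence of star products equals $R$ applied to the element of $\mathcal{B}$ built from $B$ by the same sequence. Thus $\mathcal{B}'=R\,\mathcal{B}$ as sets, and since $R$ is invertible, $\dim(\mathrm{span}(\mathcal{B}'))=\dim(\mathrm{span}(\mathcal{B}))$. So this dimension is conserved under any unitary conjugation; in the language of \cref{thm:simsim} it is a simultaneous unitary similarity invariant (indeed built from the trace invariants of Eq.~\eqref{eq:w3} and its higher analogues).

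Now suppose $S$ is simultaneously stoquasticizable and apply the stoquasticizing unitary; afterwards every $\vec b^{(i)}$ has vanishing $\mathcal{Y}$-components, i.e.\ is supported on $\mathcal{X}\cup\mathcal{D}$. I would then show that all of $\mathcal{B}$ remains supported on $\mathcal{X}\cup\mathcal{D}$, for which it suffices that $d_{\mu\nu\xi}=0$ whenever $\mu,\nu\in\mathcal{X}\cup\mathcal{D}$ and $\xi\in\mathcal{Y}$. This is immediate from the trace formula above: the generators indexed by $\mathcal{X}\cup\mathcal{D}$ are real symmetric (Eqs.~\eqref{eq:gmx} and~\eqref{eq:gmd}), so $\{\hat\lambda_\mu,\hat\lambda_\nu\}$ is real symmetric, whereas $\hat\lambda_\xi$ with $\xi\in\mathcal{Y}$ is purely imaginary and antisymmetric (Eq.~\eqref{eq:gmy}), and the trace of a symmetric matrix times an antisymmetric matrix vanishes. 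Consequently $\mathrm{span}(\mathcal{B})$ is contained in the coordinate subspace indexed by $\mathcal{X}\cup\mathcal{D}$, so $\dim(\mathrm{span}(\mathcal{B}))\le|\mathcal{X}|+|\mathcal{D}|=d(d-1)/2+(d-1)=(d^2+d-2)/2\le(d^2+d-1)/2$. Combining with the conjugation-invariance from the previous paragraph gives the claimed bound for the original $S$.

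The step I expect to be the main obstacle is the covariance statement — proving that the $d$-symbols are an $SU(d)$-invariant tensor, which is exactly what forces $\dim(\mathrm{span}(\mathcal{B}))$ to be conserved under conjugation; without it there is no reason the dimension should be preserved (a generic element of $SO(d^2-1)$ would not respect the star product). After that, the induction over star-product sequences and the vanishing of the ``mixed'' symbols $d_{\mu\nu\xi}$ with $\mu,\nu\in\mathcal{X}\cup\mathcal{D}$, $\xi\in\mathcal{Y}$ are short computations, and the final dimension count is immediate.
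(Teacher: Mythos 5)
Your proof is correct, and it reaches the bound by the same overall route as the paper --- after applying the stoquasticizing unitary, all Bloch vectors lie in the $\mathcal{X}\cup\mathcal{D}$ coordinate subspace, the star product closes on that subspace, and the dimension of $\mathrm{span}(\mathcal{B})$ is a unitary invariant --- but you justify the two key facts differently. The paper establishes closure by inspecting the explicit table of nonzero symmetric structure constants (Eq.~(\ref{eq:structureconstants}), taken from the literature) and gets dimension invariance from the fact that all dot products among elements of $\mathcal{B}$ are trace invariants, i.e.\ the Gram matrix of $\mathcal{B}$ is preserved. You instead derive the closure from the identity $d_{\mu\nu\xi}=\tfrac14\mathrm{Tr}(\{\hat\lambda_\mu,\hat\lambda_\nu\}\hat\lambda_\xi)$ plus the observation that the $\mathcal{X},\mathcal{D}$ generators are real symmetric while the $\mathcal{Y}$ generators are antisymmetric, so the mixed $d$-symbols vanish by the symmetric-times-antisymmetric trace argument --- a cleaner, table-free justification --- and you get dimension invariance by showing the $d$-tensor is invariant under the adjoint action, hence $R(\vec u*\vec v)=(R\vec u)*(R\vec v)$ and $\mathcal{B}'=R\,\mathcal{B}$ with $R$ orthogonal. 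The covariance statement is stronger than what the paper uses (it gives the transformed star-product set exactly, not just its Gram matrix), and it is a correct and standard consequence of the cyclicity argument you sketch together with orthogonality of $R$; it also makes transparent why a generic rotation in $SO(d^2-1)$ would not preserve the structure, which the paper only remarks on later. One small bonus of your count: $|\mathcal{X}|+|\mathcal{D}|=(d^2+d-2)/2$, which is the actual dimension of the traceless real-symmetric subspace and is marginally tighter than the stated $(d^2+d-1)/2$ (which is not even an integer); since the dimension is an integer the two bounds are equivalent, and your argument delivers the theorem as stated.
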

\noindent\emph{Proof sketch.} Observe that for all $H_i\in\texttt{Stoq}$, $\vec b^{(i)}_j=0$ for $j\in\mathcal{Y}$. From the definition of the star product and the form of the non-zero symmetric structure constants of $\mathfrak{su}(d)$ \cite{bossion2021general}, one observes that if $S\in\texttt{Stoq}$ all vectors in $\mathcal{B}$ are also in this subspace. The dimension of $\mathcal{B}$ is invariant under unitary transformations, so this is a necessary condition for simultaneous stoquasticity. \hfill $\square $

Full details are provided in the Appendix. Most importantly, this result leads directly to Theorem~\ref{thm:rarestoq}, the proof of which we sketch below, again leaving the algebraic details to the Appendix.

\noindent\emph{Proof sketch of Theorem~\ref{thm:rarestoq}.} From similar analysis of the star products, we can prove that for almost every $S$, $\mathrm{dim}(\mathcal{B} )=d^2-1$. That is, $\mathcal{B}$ spans the full Bloch vector space for almost every $S$. Combining this result with Theorem~\ref{thm:nogobloch}, Theorem~\ref{thm:rarestoq} immediately follows. \hfill $\square $

\section{Conclusion and Outlook} 

Quantum annealing relies on the interaction of two non-commuting Hamiltonians, and there are clear connections between the power of that computation and the stoquasticity of the Hamiltonians.  Our results provide proof that a general quantum annealing procedure does not possess any basis in which it can be described completely stoquastically.  We know that classical computing can be described using simultaneously diagonal Hamiltonians, and the seeming power of non-stoquasticity speaks to the idea that quantum advantage might lie further past simultaneously stoquastic Hamiltonians.  More work is needed to determine how tightly quantum advantage is bound up merely with these notions of simultaneous stoquasticity and how much other factors, such as  locality of the simultaneous basis, play a role.

Furthermore, our results provide a definitive set of conditions for simultaneous stoquasticity, which are, as expected, difficult to calculate in practice given the computational complexity of this problem.  The commutator condition of Theorem~\ref{thm:nogo}, while enticing from its connections to simultaneous diagonalizability (see the Appendix for a further exploration of how these geometric ideas relate to simultaneous diagonalizability) and dynamical Lie algebras, provides only a necessary but not sufficient condition and then only on a simultaneous real basis, not specifically a simultaneous stoquastic basis. The other no-go result in Theorem~\ref{thm:nogobloch} suffers a similar flaw.

Further work is also possible by extending these results beyond just stoquasticity to the full class of Hamiltonians lacking sign problems. General sign problem free Hamiltonians take a Vanishing Geometeric Phase (VGP) form \cite{jarret2018hamiltonian, Hen2021}.  This form generalizes the notion of stoquastic Hamiltonians to all Hamiltonians generated from stoquastic Hamiltonians via diagonal unitary transformations. While this is a more general form that should be studied in the context of simultaneous transformations, it lacks linearity, meaning that linear combinations of VGP Hamiltonians are not necessarily VGP, hinting that simultaneous stoquasticizability is a more fundamental concept to consider for multiple Hamiltonians.

\acknowledgements  We thank Adam Ehrenberg, Luis Pedro Garc\'{i}a-Pintos, Alexey V. Gorshkov, Dominik Hangleiter, Michael Jarret, and Alexander F. Shaw for helpful discussions. J.B.~acknowledges support by the U.S.~Department of Energy, Office of Science, Office of Advanced Scientific Computing Research, Department of Energy Computational Science Graduate Fellowship (award No.~DE-SC0019323) and by the DoE ASCR Accelerated Research in
Quantum Computing program (award No.~DE-SC0020312), DoE QSA, NSF QLCI
(award No.~OMA-2120757), DoE ASCR Quantum Testbed Pathfinder program
(award No.~DE-SC0019040), U.S. Department of Energy Award
No.~DE-SC0019449, NSF PFCQC program, AFOSR, ARO MURI, AFOSR MURI, and
DARPA SAVaNT ADVENT.

During the preparation of this paper L.B. changed affiliation from QuICS/NIST to QuAIL/KBR.  L.B. is supported by the Prime Contract No. 80ARC020D0010 with the NASA Ames Research Center and is grateful for support from DARPA under IAA 8839 annex 128. The United States Government retains, and by accepting the article for publication, the publisher acknowledges that the United States Government retains, a nonexclusive, paid-up, irrevocable, worldwide license to publish or reproduce the published form of this work, or allow others to do so, for United States Government purposes.

\begin{appendix}

\section{Some properties of the star product}
Here we demonstrate a number of useful identities regarding the star product introduced in the main text. Some of these properties do not seem to be well documented in the literature on generalized Bloch vectors due to the focus on single density matrices. In these contexts, only star products between the same Bloch vector arise, which obscures some of the more general properties of the product. In particular, we emphasize that the product is non-associative.

We define the star product of two vectors $\vec a, \vec b\in\mathbb{R}^{d^2-1}$ component-wise as
\begin{equation}
    (\vec a * \vec b)_\xi = d_{\mu \nu \xi} a_\mu b_\nu,
\end{equation}
where $d_{\mu \nu \xi}$ are totally symmetric structure constants for $\mathfrak{su}(d)$. Observe that this product is basis-dependent due to the structure constants. 

The star product has the following properties which can be verified by explicit component-wise computation:
\begin{align*}
    &\vec a * \vec b = \vec b * \vec a & &(\text{commutative})\\
    &(\vec a * \vec b) * \vec c\not = \vec a * (\vec b * \vec c) & &(\text{non-associative})\\
    &\vec a * (\vec b + \vec c) = \vec a * \vec b + \vec a * \vec c & & (\text{distributive})\\
    &(\vec a * \vec b) \cdot \vec c =  (\vec b * \vec c) \cdot \vec a =  (\vec a * \vec c) \cdot \vec b.
\end{align*}

The last identity can be used to show that
\begin{align}
    (\vec a * \vec b) \cdot (\vec c * \vec d)&= ((\vec a * \vec b) * \vec c)\cdot \vec d \nonumber\\
    &= ((\vec a * \vec b) * \vec d)\cdot \vec c \nonumber \\
    &= ((\vec c * \vec d) * \vec a)\cdot \vec b \nonumber \\
     &= ((\vec c * \vec d) * \vec b)\cdot \vec a.
\end{align}
This result generalizes: the dot product of any combination of star products can be rearranged such that the dot product is just with a single vector at the end of the computation, provided one is careful with the non-associativity of the star product.

When it is not misleading, it can be convenient to adopt the convention that multiplication proceeds from left to right so we can drop the parenthesis and have, for instance, that
\begin{equation}
    (((\vec a * \vec b) * \vec c)* \vec d) = \vec a * \vec b * \vec c * \vec d.
\end{equation}

Finally, we introduce the notation that $\vec b^{*k}$ denotes the $k$-fold star product $\vec b * \vec b * \cdots \vec b$, such that $\vec b^{*1}=\vec b$, $\vec b^{*2}=\vec b*\vec b$ and so on.

\section{Proof of sufficient word length for unitary similarity of a pair of Hermitian matrices}\label{app:proofforpair}
\begin{theorem}\label{thm:densityproof}
Two Hermitian matrices $H, H'\in\mathbb{C}^{d\times d}$ are unitarily similar if and only if $\mathrm{Tr}[H^k]=\mathrm{Tr}[H'^k]$ for $k\in[1,d]$.
\end{theorem}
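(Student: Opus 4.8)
The plan is to prove both directions, with the forward direction being essentially trivial and the reverse direction carrying all the content. For the forward direction: if $H' = UHU^\dagger$ for some unitary $U$, then $H'^k = UH^kU^\dagger$ and cyclicity of the trace immediately gives $\mathrm{Tr}[H'^k] = \mathrm{Tr}[H^k]$ for every $k$, in particular for $k \in [1,d]$. So the real work is showing that matching these $d$ power-trace invariants is sufficient.

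For the reverse direction, the natural approach is to show that the data $\{\mathrm{Tr}[H^k]\}_{k=1}^d$ determines the multiset of eigenvalues of $H$ (with multiplicity), since two Hermitian matrices are unitarily similar if and only if they have the same eigenvalues with the same multiplicities (both being diagonalizable by unitaries, via the spectral theorem). First I would recall that for a $d \times d$ matrix the power sums $p_k = \sum_i \lambda_i^k = \mathrm{Tr}[H^k]$ for $k = 1,\dots,d$ determine the elementary symmetric polynomials $e_1,\dots,e_d$ in the eigenvalues via the Newton--Girard identities (this inversion is valid over a field of characteristic zero, in particular $\mathbb{R}$ or $\mathbb{C}$, and requires exactly $p_1,\dots,p_d$ to pin down $e_1,\dots,e_d$). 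The $e_j$ are, up to sign, the coefficients of the characteristic polynomial $\chi_H(t) = \prod_i (t - \lambda_i) = t^d - e_1 t^{d-1} + \cdots + (-1)^d e_d$. Hence $\mathrm{Tr}[H^k] = \mathrm{Tr}[H'^k]$ for $k \in [1,d]$ forces $\chi_H = \chi_{H'}$, so $H$ and $H'$ have the same characteristic polynomial and therefore the same eigenvalues with the same algebraic multiplicities. Since $H, H'$ are Hermitian, algebraic and geometric multiplicities coincide, so there exist unitaries $V, V'$ with $VHV^\dagger = \mathrm{diag}(\lambda_1,\dots,\lambda_d) = V'H'V'^\dagger$, and then $U = V'^\dagger V$ satisfies $UHU^\dagger = H'$.

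The main obstacle — really the only nontrivial point — is justifying that $d$ power sums suffice to recover the characteristic polynomial, i.e.\ that the Newton--Girard recursion $p_k - e_1 p_{k-1} + \cdots + (-1)^{k-1} e_{k-1} p_1 + (-1)^k k\, e_k = 0$ can be solved successively for $e_1, \dots, e_d$ given $p_1,\dots,p_d$. This is where characteristic zero enters: the coefficient of $e_k$ in the $k$-th identity is $\pm k \neq 0$, so each $e_k$ is determined once $e_1,\dots,e_{k-1}$ and $p_1,\dots,p_k$ are known. I would state this as the key lemma and either cite a standard reference or give the one-line induction. One should also remark why $k \leq d$ is both necessary and sufficient: necessary because $p_1,\dots,p_{d-1}$ alone leave $e_d$ (hence $\det H$) undetermined — e.g.\ a nonzero nilpotent-spectrum perturbation in the non-Hermitian world, or concretely two Hermitian matrices with the same first $d-1$ power traces but different determinants; sufficient because $\chi_H$ has degree $d$ and, by Cayley--Hamilton, all higher power sums $p_k$ with $k > d$ are already polynomial functions of $p_1,\dots,p_d$, so they contribute no new invariants. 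With these pieces the theorem follows. This also serves as the base case ($m=1$) anchoring the general bound $\ell_{\mathrm{max}}$ discussed in the main text.
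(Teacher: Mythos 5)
Your proof is correct, and its core is the same as the paper's: the Newton--Girard inversion showing that the power sums $p_1,\dots,p_d$ determine the elementary symmetric polynomials, hence the characteristic polynomial, hence the spectrum with multiplicities. Where you diverge is the endgame. The paper, having matched the eigenvalues, concludes that $\mathrm{Tr}[H^k]=\mathrm{Tr}[H'^k]$ holds for all $k>d$ as well and then closes by appealing to Theorem~\ref{thm:simsim} (equality of all word traces implies unitary similarity), i.e.\ it frames the result as ``the first $d$ words suffice'' within the general trace-invariant machinery. You instead close directly with the spectral theorem: equal spectra with equal multiplicities yields unitaries $V,V'$ diagonalizing $H,H'$ to the same matrix, and $U=V'^\dagger V$ does the job. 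Your route is more elementary and self-contained---for a single Hermitian matrix the full strength of Theorem~\ref{thm:simsim} is unnecessary---while the paper's phrasing makes explicit the redundancy of longer words, which is the viewpoint it generalizes to $m\geq 2$ via the bound $\ell_{\mathrm{max}}$. Your added remarks (the $\pm k\neq 0$ coefficient justifying the inversion in characteristic zero, the Cayley--Hamilton explanation of why $k>d$ adds nothing, and the observation that $d-1$ power sums would leave $e_d$ undetermined) are accurate and, if anything, slightly more careful than the paper's own exposition.
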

\begin{proof}
In general, from Theorem~\ref{thm:simsim}, it is necessary to check the trace conditions for all words $H^k, H'^k$. We show it is only necessary to check the first $d$ such words. That is, $\mathrm{Tr}[H^k]=\mathrm{Tr}[H'^k]$ for $k\in[1,d]$ implies $\mathrm{Tr}[H^k]=\mathrm{Tr}[H'^k]$ for $k>d$. 

Observe that traces are basis-independent so we may write the given set of equivalences for $k\in[1,d]$ as
\begin{equation}\label{eq:powersums}
    \mathrm{Tr}[H^k]=\mathrm{Tr}[H'^k]\implies\sum_{j=1}^d \lambda_j^k =  \sum_{j=1}^d \lambda_j'^k,
\end{equation}
where $\{\lambda_j\}$, $\{\lambda_j'\}$ are the eigenvalues of $H, H'$, respectively. Because $H, H'$ are Hermitian these are real. The sums in Eq.~(\ref{eq:powersums}) are known as the power sums $p_k(\lambda_1, \cdots, \lambda_d)$ and $p_k(\lambda'_1, \cdots, \lambda'_d)$. Via the Newton-Girard identities, one can explicitly write the first $d$ elementary symmetric polynomials $e_1, \cdots e_d$ in these eigenvalues in terms of the power sums $p_k$ for $k\in[1,d]$. By the equivalence of the power sums between the variables $\{\lambda_j\}$, $\{\lambda_j'\}$, the elementary symmetric polynomials in these two variables are also equivalent. 

The elementary symmetric polynomials then allow us to write the chain of equalities via a standard expansion of a polynomial in some variable $x$ with roots $\{\lambda_j\}$:
\begin{align}
    \prod_{j=1}^d &(x-\lambda_j)=\sum_{k=0}^n(-1)^ke_k(\lambda_1, \cdots,\lambda_n)x^{n-k}\nonumber \\
   &=\sum_{k=0}^n(-1)^ke_k(\lambda'_1, \cdots,\lambda'_n)x^{n-k}=\prod_{j=1}^d (x-\lambda'_j).
\end{align}
This implies that $\lambda_j=\lambda_j'$ for all $j\in[1,d]$. This in turn implies that $\mathrm{Tr}[H^k]=\mathrm{Tr}[H'^k]$ for $k>d$, proving the result.
\end{proof}
We remark that Theorem~\ref{thm:densityproof} is often stated without proof in the context of giving the independent trace invariants of density matrices $\rho\in\mathbb{C}^{d\times d}$ \cite{hioe1981nlevel, mahlerquantum, geometryquantum}. Given that Hermitian matrices have $d$ eigenvalues, the theorem is intuitively obvious, but we have not seen the explicit proof of this statement in the physics literature. 

We also provide an alternative statement of the theorem and a corresponding proof which makes use of the Bloch vector formalism for trace invariants. 

\begin{theorem}\label{thm:densityproofbloch}
Two traceless Hermitian matrices $H, H'\in\mathbb{C}^{d\times d}$ are unitarily similar if and only if $\vec{a}^{*k} \cdot \vec a=\vec b^{*k} \cdot \vec b$ for $k\in[1, d-1]$, where $\vec a, \vec b$ are the Bloch vectors corresponding to $H,H'$, respectively.
\end{theorem}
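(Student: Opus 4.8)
The plan is to reduce Theorem~\ref{thm:densityproofbloch} to Theorem~\ref{thm:densityproof} (equivalently Theorem~\ref{thm:densityproofbloch} follows once we know the first $d-1$ power sums determine the spectrum of a traceless Hermitian matrix). The point is simply to re-express the trace invariants $\mathrm{Tr}[H^{k+1}]$ in Bloch-vector language and observe that, for traceless $H$, these are exactly the quantities $\vec a^{*k}\cdot\vec a$ up to normalization, so the range $k\in[1,d-1]$ in the new statement corresponds to $\mathrm{Tr}[H^{k}]$ with $k\in[2,d]$, while $\mathrm{Tr}[H^1]=0$ is automatic by tracelessness. Thus the two sets of conditions are literally the same set of equations.

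The key steps, in order, are as follows. First, recall from the main text (Eq.~(\ref{eq:gentrtobloch}) and the identity that any trace invariant is of the form $\vec v\cdot\vec b^{(i)}$) that for a single Bloch vector $\vec a$ one has $\mathrm{Tr}[H^2]=2\,\vec a\cdot\vec a$ and $\mathrm{Tr}[H^3]=2\,(\vec a*\vec a)\cdot\vec a = 2\,\vec a^{*2}\cdot\vec a$. Second, prove by induction on $k$ that $\mathrm{Tr}[H^{k+1}] = 2\,\vec a^{*k}\cdot\vec a$ for all $k\ge 1$: the inductive step writes $H^{k+1}=H^k\cdot H$, expands $H^k$ using Eq.~(\ref{eq:basisproduct}) repeatedly — each multiplication by $H=\vec a\cdot\hat{\vec\lambda}$ peels off a factor of the $d$-symbol (the $f$-symbol terms drop out of the final trace by antisymmetry, and the identity-proportional terms reorganize into lower-order star products that, using the cyclic/symmetry identities for the star product proved in the Appendix, recombine correctly) — leaving $\mathrm{Tr}[H^{k+1}]=2\,d_{\mu\nu\xi}(\vec a^{*(k-1)})_\mu a_\nu a_\xi$ with the left-to-right convention; the key algebraic input is the associativity-careful identity $(\vec a*\vec b)\cdot\vec c=(\vec b*\vec c)\cdot\vec a$ from the Appendix. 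Third, combine: $\vec a^{*k}\cdot\vec a=\vec b^{*k}\cdot\vec b$ for $k\in[1,d-1]$ is equivalent to $\mathrm{Tr}[H^k]=\mathrm{Tr}[H'^k]$ for $k\in[2,d]$; adjoining the trivially-true $k=1$ case and invoking Theorem~\ref{thm:densityproof} gives unitary similarity, and the converse is immediate since trace invariants are unitarily invariant.

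I expect the main obstacle to be the bookkeeping in the inductive step — specifically verifying that when one iterates Eq.~(\ref{eq:basisproduct}) to expand $\mathrm{Tr}[H^{k+1}]$, the terms involving the antisymmetric structure constants $f_{ijk}$ genuinely cancel and the identity-proportional terms from intermediate products do not spuriously contribute, so that the result collapses cleanly to a single $k$-fold star product contracted with $\vec a$. A cleaner route that sidesteps some of this is to work in the eigenbasis: for traceless Hermitian $H$ with eigenvalues $\{\lambda_j\}$, one has $\mathrm{Tr}[H^{k+1}]=\sum_j\lambda_j^{k+1}=p_{k+1}(\lambda)$, and separately one can check that $\vec a^{*k}\cdot\vec a$ is (up to the factor $2$) a symmetric function of the eigenvalues that agrees with $p_{k+1}$; since the $\{\lambda_j\}$ satisfy $\sum_j\lambda_j=0$, knowing $p_2,\dots,p_d$ already pins down all of $p_{\geq 2}$ via Newton--Girard (the missing $p_1$ being zero), which is precisely the content needed. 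Either way the analytic substance is entirely in Theorem~\ref{thm:densityproof}; this theorem is a translation, and the only real work is certifying the dictionary $\mathrm{Tr}[H^{k+1}]\leftrightarrow 2\,\vec a^{*k}\cdot\vec a$.
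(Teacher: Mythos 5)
Your strategy---reduce the Bloch-vector statement to Theorem~\ref{thm:densityproof} via a dictionary between $\mathrm{Tr}[H^{k+1}]$ and $\vec a^{*k}\cdot\vec a$---is genuinely different from the paper's proof, which never touches Newton--Girard at this point: it diagonalizes $H$, observes that every star power of $\vec a$ then lies in the $(d-1)$-dimensional subspace indexed by $\mathcal{D}$, and concludes by dimension counting that at most $d-1$ of the invariants $\vec a^{*k}\cdot\vec a$ can be independent. Your route could be made to work, but the exact identity you hang it on, $\mathrm{Tr}[H^{k+1}]=2\,\vec a^{*k}\cdot\vec a$, is false for $k\geq 3$: the identity-proportional terms generated by iterating Eq.~(\ref{eq:basisproduct}) neither cancel nor ``recombine'' into the top star product; they survive as products of lower-order invariants. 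Concretely, $H^2=\tfrac{2}{d}(\vec a\cdot\vec a)I+(\vec a*\vec a)\cdot\hat{\vec\lambda}$, so
\begin{equation}
\mathrm{Tr}[H^4]=\tfrac{4}{d}\,(\vec a\cdot\vec a)^2+2\,(\vec a*\vec a)\cdot(\vec a*\vec a)=\tfrac{4}{d}\,(\vec a\cdot\vec a)^2+2\,\vec a^{*3}\cdot\vec a,
\end{equation}
which differs from $2\,\vec a^{*3}\cdot\vec a$ whenever $\vec a\neq 0$. The same error infects your eigenbasis alternative: $2\,\vec a^{*k}\cdot\vec a$ is \emph{not} equal to the power sum $p_{k+1}$, only to $p_{k+1}$ minus a polynomial in the lower power sums.

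The fix is to weaken the dictionary to a triangular one and argue by induction. Writing $H^k=\tfrac{1}{d}\mathrm{Tr}[H^k]\,I+\vec c^{(k)}\cdot\hat{\vec\lambda}$, the product $H^{k+1}=H^kH$ gives $\mathrm{Tr}[H^{k+1}]=2\,\vec c^{(k)}\cdot\vec a$ and $\vec c^{(k+1)}=\tfrac{1}{d}\mathrm{Tr}[H^k]\,\vec a+\vec c^{(k)}*\vec a$ (the $f$-symbol term vanishes because $[H^k,H]=0$), whence $\mathrm{Tr}[H^{k+1}]=2\,\vec a^{*k}\cdot\vec a+Q_k$ with $Q_k$ a universal polynomial in the invariants $\vec a^{*j}\cdot\vec a$, $j<k$. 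With this corrected statement, the family $\{\vec a^{*k}\cdot\vec a\}_{k=1}^{d-1}$ and the family $\{\mathrm{Tr}[H^k]\}_{k=2}^{d}$ determine each other, matching them pairwise by induction on $k$, and Theorem~\ref{thm:densityproof} (plus tracelessness for $k=1$ and the trivial forward direction) completes your argument. So the reduction is salvageable and arguably instructive---it makes the correspondence with power sums explicit---but as written the inductive step asserts, and would ``prove,'' a false identity, whereas the paper's dimension-count in the diagonal Gell-Mann subspace sidesteps this bookkeeping entirely.
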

\begin{proof}
Consider the infinite set of vectors $\mathcal{A}=\{\vec a, \vec a* \vec a, \vec a *\vec a *\vec a, \cdots \}$. The invariants of $A$ under unitary transformations are $\vec v \cdot \vec a$ for any $\vec v\in \mathrm{span}(\mathcal{A})$. 

Any Hermitian matrix can be diagonalized via a unitary matrix and the invariants are unchanged under this transformation so assume that we have diagonalized $A$. Suppose we are using a generalized Gell-Mann basis, as described in the main text. Then the corresponding $\vec a$ has $a_j=0$ for $j\in\mathcal{X}, \mathcal{Y}$. That is, $\vec a$ is in a subspace $T$ of dimension $d-1$ spanned by $\hat{\vec\lambda}^{\mathrm{(diag)}}$, with corresponding indices $\mathcal{D}$.
We have that
\begin{align}
    (\vec a * \vec a)_k = d_{ijk} a_i a_j,
\end{align}
where the only non-zero terms in the sum correspond to non-zero $d_{ijk}$ with $i,j\in \mathcal D$. The only non-zero $d_{ijk}$ satisfying this condition have $k\in \mathcal{D}$. Therefore, $\vec a * \vec a\in T$. This is true for all $\vec v \in \mathcal A$. As $T$ is of dimension of $d-1$, only up to $d-1$ of the vectors in $\mathcal A$ are independent. Adding in the traceless condition, this makes for a maximum of $d$ independent invariants. Therefore, to determine the simultaneous similarity of $H, H'$ it is sufficient to check the equivalence of only the $d$ invariants in the theorem statement. 
\end{proof}
We remark if $H, H'$ are not traceless one can merely check if the matrices have the same trace and then apply the theorem above.

\section{Bound on the length of words}

This section seeks to prove the bound on $\ell_{max}$, the maximum length of word we need to check to capture all independent invariants, given in Eq.~(\ref{eq:lmax}).  This proof is just an application of the bound and construction provided in Ref.~\cite{gerasimova2013simultaneous}.

The idea of simultaneous unitary similarity is derived in Ref.~\cite{gerasimova2013simultaneous}, starting from whether two complex matrices are unitarily similar.  It is a known result that two $n\times n$ complex matrices, $A$ and $B$ (with $A^*$ and $B^*$ denoting their complex conjugates), are unitarily similar if and only if $\Tr [w(\{A,A^*\})] = \Tr [w(\{B,B^*\})]$ for every word $w(\{s,t\})$ of two non-commuting matrices whose length is less than or equal to 
\begin{equation}
    \label{eq:lmaxprime}
    \ell'_{\mathrm{max}}=\mathrm{min}\begin{cases}
    \ceil{\frac{n^2+2}{3}}\\
    n\sqrt{\frac{2n^2}{n-1}+\frac{1}{4}}+\frac{n}{2}-2.
    \end{cases}
\end{equation}
The $O(n^2)$ bound is due to Paz~\cite{paz1984application} and the asymptotically better $O(n^{3/2})$ bound is due to Pappacena~\cite{pappacena1997upper}. This result holds for arbitrary complex matrices, and as we discussed in the preceding section, the bound on considered words can be much tighter if we consider just the unitary similarity of two Hermitian matrices.  However, this more general bound is important in the context of the unitary similarity of sets of matrices.  In Ref.~\cite{gerasimova2013simultaneous}, the authors produce an encoding of sets of matrices into two larger matrices such that if those two larger matrices are unitarily similar then all the individual pairs of matrices from the two sets must be unitarily similar under the same transformation.  Furthermore, the word trace condition for unitary similarity of these larger matrices is equivalent to a word trace condition on all words of that same length compared between words made entirely of one of the sets and words made entirely from the other set.

Specifically, given two sets of matrices $S = \{s_1,s_2,\ldots,s_m\}$ and $S' = \{s'_1,s'_2,\ldots,s'_m\}$ where each matrix is of size $d\times d$, we can encode these sets into matrices $A$ and $B$.  $A$ will be a block matrix constructed of $d\times d$ matrices, and the diagonal and all blocks below it are zero.  Immediately above the diagonal are $d\times d$ identity matrix blocks, and into the remaining portion of the upper triangular portion, we slot the matrices from $S$ into the blocks.  In order to do this, we need $m$ spaces remaining in the upper triangle.  If the $A$ matrix is $c$ blocks long, then there will be $(c^2-3c+2)/2$ spaces for matrices from our set $S$.  Thus, we must choose $c$ such that $(c^2-3c+2)/2\geq m$.  Any unused blocks are set to zero.  The $B$ matrix is constructed similarly except using $S'$ instead of $S$.  These matrices will both be of size $n=cd$.

In Ref.~\cite{gerasimova2013simultaneous} it is proven that the matrices $A$ and $B$ are unitarily similar if and only if the sets $S$ and $S'$ are unitarily similar.  Furthermore, the trace word conditions on $A$ and $B$ being unitarily similar is equivalent to the condition that all words have equal traces between sets $S$ and $S'$ with the lengths of these necessary words being bounded by the same length as the words necessary to check unitary similarity of $A$ and $B$.  Therefore, it is sufficient to check words up to length based off Eq.~(\ref{eq:lmaxprime}) with $n=cd$, recovering Eq.~(\ref{eq:lmax}).

\section{Trace invariants in terms of Bloch vectors}
In this section, we demonstrate the claim from the main text that all trace invariants under simultaneous unitary transformations can be expressed as linear combinations of invariants $\vec v\cdot \vec b^{(j)}$ for $\vec v\in\mathcal{B}$ and $\vec b^{(j)}$ a Bloch vector corresponding to the Hamiltonian $H_j$. Recall, we define $\mathcal{B}$ to be the set of all possible star products between Bloch vectors in $B$.

Furthermore, recall that for a trace invariant for a word of arbitrary length we have
\begin{equation}\label{eq:gentrtobloch_app}
    \mathrm{Tr}[w(S)]=\mathrm{Tr}\left[\prod_{j=1}^{|w|}\sum_{\mu_j=1}^{d^2-1}b^{(w_j)}_{\mu_j} \hat\lambda_{\mu_{j}}\right],
\end{equation}
where we have denoted the $j$-th element of $w$ as $w_j$. We can flip the order of the product and the sum and write
\begin{equation}\label{eq:gentrtobloch_app2}
    \mathrm{Tr}[w(S)]=\sum_{\{\mu_1\cdots \mu_{|w|}\}}\mathrm{Tr}\left[\prod_{j=1}^{|w|}b^{(w_j)}_{\mu_j} \hat\lambda_{\mu_{j}}\right],
\end{equation}
where the sum is over all ordered sets (with replacement) of indices $\in [1, d^2-1]$. We now make use of Eq.~(\ref{eq:basisproduct}) to evaluate the products of basis elements of $\mathfrak{su}(d)$. Due the sum over all ordered sets the antisymmetric terms in each product of $\hat\lambda_{\mu_j}$ cancel and we are left to consider the identity terms and the terms with symmetric structure constants. As the basis elements are all traceless, what ultimately survives the trace once we fully expand out all products are the terms proportional to identity. 

Under an expansion of the products and evalaution of the trace the term with the most symmetric structure constants is proportional to
\begin{align}
    (d_{\mu_1\mu_2\nu_1} d_{\mu_3\nu_1\nu_2}d_{\mu_4\nu_2\nu_3}&\cdots d_{\mu_{|w|-1}\nu_{|w|-3}\nu_{|w|-2}}\delta_{\nu_{|w|-2}\mu_{|w|}})\nonumber \\
    &\times (b^{(w_j)}_{\mu_1}\cdots b^{(w_{|w|})}_{\mu_{|w|}}),
\end{align}
where we use the convention of summing over repeated indices. After staring at the proliferation of indices, one observes that this term can be compactly written as
\begin{align}
   (((\vec b^{(w_1)}*\vec b^{(w_2)})*\vec b^{(w_3)})*\cdots*\vec b^{(w_{|w|-1})})\cdot\vec b^{(w_{|w|})}
\end{align}
All other terms in the expansion of the products of basis elements of $\mathfrak{su}(d)$ consist of fewer symmetric structure constants and more Kronecker deltas. These other terms amount to dot products between terms similar to this one but with smaller word length. Therefore, this term is the only one that is not dependent on invariants established from smaller length words. By the commutativity of the star product we may permute any products we like, provided we respect the lack of associativity. From this, and the fact that all words yield trace invariants, we establish the intended claim.

This result implies a nice geometric interpretation of the trace invariants. In particular, the relative angles between all vectors in $\mathcal{B}$ are the invariants under unitary transformations. This is a manifestation of the fact that $SU(d)\subset SO(d^2-1)$. Due to the asymmetry in Bloch space of the symmetric structure constants, the set of star products $\mathcal{B}$ are not, in general, rotationally invariant---the rotations where this infinite set of vectors \emph{do} rotate rigidly picks out the rotations corresponding to $SU(d)$.

\section{Simultaneous stoquasticizability is rare}
For all the proofs in this section it will be necessary to identify the non-zero symmetric structure constants of $\mathfrak{su}(d)$ in the generalized Gell-Mann basis. We take the explicit form of these from Ref.~\cite{bossion2021general}, with some slight differences in indexing to account for our conventions differing from those used by those authors \footnote{In particular, relative to the conventions of Ref.~\cite{bossion2021general}, we do the following: (a) we index our diagonal basis elements of the generalized Gell-Mann basis $[1,d-1]$ as opposed to to $[2, d]$; (b) we index the symmetric and anti-symmetric basis elements in increasing, rather than decreasing order. For instance, for $\hat\lambda_{ij}^{(x)}$ we have $i<j$, not $i>j$; (c) our basis elements are a factor of two larger and therefore our trace orthonormality condition is four times larger. That is, they have $\mathrm{Tr}(\hat\lambda_i\hat\lambda_j)=\delta_{ij}/2$, whereas we have $\mathrm{Tr}(\hat\lambda_i\hat\lambda_j)=2\delta_{ij}$. Despite this, the normalization of the structure constants agree}. We identify these symmetric structure constants based on whether the indices correspond to symmetric (Eq.~\ref{eq:gmx}), skew-symmetric (Eq.~\ref{eq:gmy}), or diagonal (Eq.~\ref{eq:gmd}) basis elements. In particular, we give a one-to-one mapping between indices $i\in[1,d^2-1]$ and indices $\mathcal{X}_{jk}, \mathcal{Y}_{jk}$ and $\mathcal{D}_j$ corresponding to the sets of symmetric, skew-symmetric and diagonal basis element, respectively, as follows:
\begin{align}
    \mathcal{X}_{jk}&=k^2+2(j-k)-1\\
    \mathcal{Y}_{jk}&=k^2+2(j-k)\\
    \mathcal{D}_{j}&=j(j+2),
\end{align}
where $1\leq j <k \leq d$. Let $\mathcal{X}=\{\mathcal{X}_{jk}\}$, $\mathcal{Y}=\{\mathcal{Y}_{jk}\}$, and  $\mathcal{D}=\{\mathcal{D}_{j}\}$. Given such an identification, we have the following non-zero symmetric structure constants:
\begin{align}\label{eq:structureconstants}
    &d_{\mathcal{X}_{jk}\mathcal{X}_{lj}\mathcal{X}_{lk}}=d_{\mathcal{X}_{jk}\mathcal{Y}_{lj}\mathcal{Y}_{lk}}=d_{\mathcal{X}_{jk}\mathcal{Y}_{kl}\mathcal{Y}_{jl}}=-d_{\mathcal{X}_{jk}\mathcal{Y}_{jl}\mathcal{Y}_{lk}}=\frac{1}{2} \nonumber\\
    &d_{\mathcal{X}_{jk}\mathcal{X}_{jk}\mathcal{D}_{j-1}}=d_{\mathcal{Y}_{jk}\mathcal{Y}_{jk}\mathcal{D}_{j-1}}=-\sqrt{\frac{j-1}{2j}} \nonumber\\
    &d_{\mathcal{X}_{jk}\mathcal{X}_{jk}\mathcal{D}_{l-1}}=d_{\mathcal{Y}_{jk}\mathcal{Y}_{jk}\mathcal{D}_{l-1}}=\sqrt{\frac{1}{2l(l-1)}}, \quad j< l<k \nonumber\\
     &d_{\mathcal{X}_{jk}\mathcal{X}_{jk}\mathcal{D}_{k-1}}=d_{\mathcal{Y}_{jk}\mathcal{Y}_{jk}\mathcal{D}_{k-1}}=\frac{2-k}{\sqrt{2k(k-1)}} \nonumber\\
    &d_{\mathcal{X}_{jk}\mathcal{X}_{jk}\mathcal{D}_{l-1}}=d_{\mathcal{Y}_{jk}\mathcal{Y}_{jk}\mathcal{D}_{l-1}}=\sqrt{\frac{2}{l(l-1)}}, \quad k < l \nonumber\\
    &d_{\mathcal{D}_{j-1}\mathcal{D}_{k-1}\mathcal{D}_{k-1}}=\sqrt{\frac{2}{j(j-1)}}, \quad  k < j \nonumber\\
    &d_{\mathcal{D}_{j-1}\mathcal{D}_{j-1}\mathcal{D}_{j-1}}=(2-j)\sqrt{\frac{2}{j(j-1)}}
\end{align}
We shall find for the following proofs that it is sufficient to observe which symmetric structure constants are non-zero, but for explicit application of these theorems these analytic expressions would be convenient. 

As implied by the structure constants being considered, we will make use of the generalized Gell-Mann basis throughout these proofs, but we observe that all trace invariants are basis independent. Consequently, the dimensions of spaces spanned by the star products of Bloch vectors are also basis independent quantities. 

We begin with a simple theorem describing the maximum size of the set of star products arising from simultaneously stoquastic matrices.

\begin{theorem}[Theorem 4 from the main text]\label{thm:stoqdim}
Let  $S=\{H_1, \cdots, H_m\}$ be a set of Hermitian matrices with corresponding Bloch vectors $B=\{\vec b_1, \vec b_2, \cdots \vec b_m\}$. Let $\mathcal{B}$ be the set of all possible star products between elements of $B$. A necessary condition for the elements of $S$ to be simultaneously stoquastizable is that $\mathrm{dim}(\mathrm{span}(\mathcal{B}) ) )\leq(d^2+d-1)/2$.
\end{theorem}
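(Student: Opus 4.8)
The plan is to work directly in the generalized Gell-Mann basis, where stoquasticity has a clean coordinate description, and then show that the subspace picked out by that description is closed under the star product, so that $\mathrm{span}(\mathcal{B})$ is confined to it. First I would recall that if every $H_i\in S$ is stoquastic, then in the Gell-Mann basis each Bloch vector $\vec b_i$ satisfies $(\vec b_i)_\mu=0$ for all $\mu\in\mathcal{Y}$; the only nonzero components lie in $\mathcal{X}\cup\mathcal{D}$. Let $T$ denote this coordinate subspace, so $\dim T=|\mathcal{X}|+|\mathcal{D}|=d(d-1)/2+(d-1)=(d^2+d-2)/2$. Every $\vec b_i\in T$.

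The key step is to show $T$ is closed under $*$: if $\vec a,\vec c\in T$ then $\vec a*\vec c\in T$, i.e. $(\vec a*\vec c)_\xi=0$ for $\xi\in\mathcal{Y}$. Since $(\vec a*\vec c)_\xi=d_{\mu\nu\xi}a_\mu c_\nu$ and $a_\mu,c_\nu$ vanish unless $\mu,\nu\in\mathcal{X}\cup\mathcal{D}$, it suffices to check that $d_{\mu\nu\xi}=0$ whenever $\mu,\nu\in\mathcal{X}\cup\mathcal{D}$ and $\xi\in\mathcal{Y}$. This is read off from the explicit list of nonzero symmetric structure constants in Eq.~(\ref{eq:structureconstants}): inspecting that list, every nonzero $d_{ijk}$ with a $\mathcal{Y}$ index in one slot has a $\mathcal{Y}$ index in (exactly) one other slot as well — the $\mathcal{Y}$ indices always occur in pairs ($d_{\mathcal{X}\mathcal{Y}\mathcal{Y}}$-type, $d_{\mathcal{Y}\mathcal{Y}\mathcal{D}}$-type) and never singly. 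Hence no structure constant has exactly one index in $\mathcal{Y}$, which is precisely the statement that $d_{\mu\nu\xi}=0$ when $\mu,\nu\notin\mathcal{Y}$ but $\xi\in\mathcal{Y}$. Therefore $\vec a*\vec c\in T$.

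By induction on the number of factors, every iterated star product of the $\vec b_i$ lies in $T$, so $\mathcal{B}\subseteq T$ and $\dim(\mathrm{span}(\mathcal{B}))\le\dim T=(d^2+d-2)/2\le(d^2+d-1)/2$, which is the claimed bound. To finish, I would invoke the fact — established earlier in the excerpt (Eq.~(\ref{eq:w3}) and the surrounding discussion, and the ``Trace invariants in terms of Bloch vectors'' appendix) — that $\dim(\mathrm{span}(\mathcal{B}))$ is a simultaneous unitary invariant of $S$: the components of $\mathcal{B}$ appear only through dot products $\vec v\cdot\vec b^{(j)}$ which are trace invariants, and more directly the star-product structure transforms covariantly under the $SU(d)\subset SO(d^2-1)$ action, so the linear span of $\mathcal{B}$ rotates rigidly and its dimension is unchanged. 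Consequently, if $S$ is simultaneously stoquasticizable — i.e. $UH_iU^\dagger$ is stoquastic for some fixed $U$ and all $i$ — then applying the bound to the transformed set gives $\dim(\mathrm{span}(\mathcal{B}))\le(d^2+d-1)/2$ for the original $S$ as well.

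The main obstacle is the combinatorial verification in the second paragraph: one must be certain that the table in Eq.~(\ref{eq:structureconstants}) is exhaustive and that no listed family of nonzero $d_{ijk}$ ever has exactly one index in $\mathcal{Y}$. This is a finite but slightly tedious case check across the seven families in Eq.~(\ref{eq:structureconstants}); everything else (the coordinate description of $\texttt{Stoq}$, the dimension count, the inductive closure argument, and the invariance of $\dim(\mathrm{span}(\mathcal{B}))$) is routine once that structural fact about the $\mathcal{Y}$ indices is in hand. A minor subtlety worth a remark is that the bound obtained is in fact $\dim T=(d^2+d-2)/2$, one less than the stated $(d^2+d-1)/2$; the weaker inequality in the theorem statement still follows immediately, so no harm is done, though one could tighten it.
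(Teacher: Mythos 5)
Your proposal is correct and follows essentially the same route as the paper's own proof: stoquasticity forces the $\mathcal{Y}$-components of every Bloch vector to vanish, the absence of any nonzero $d_{\mu\nu\xi}$ with exactly one index in $\mathcal{Y}$ (Eq.~(\ref{eq:structureconstants})) makes the $\mathcal{X}\cup\mathcal{D}$ subspace closed under the star product, and the unitary invariance of $\mathrm{dim}(\mathrm{span}(\mathcal{B}))$ transfers the bound to any simultaneously stoquasticizable $S$. One trivial correction: the tighter bound $\dim T=(d^2+d-2)/2$ is smaller than the stated $(d^2+d-1)/2$ by $1/2$, not by $1$ (and since $d(d+1)$ is even the two inequalities are equivalent for integer dimensions); the paper itself loosely calls $(d^2+d-1)/2$ the subspace dimension.
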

\begin{proof}
Suppose $H\in\texttt{Stoq}$ for all $H\in S$. Observe from the list of non-zero symmetric structure constants in Eq.~(\ref{eq:structureconstants}) that any star products between elements of $B$ necessarily have all components $j\in \mathcal{Y}$ equal to zero. That is, any star product between vectors in the $\mathcal{X}, \mathcal{D}$ subspace remain in that subspace. Therefore, $\mathrm{dim}(\mathrm{span}(\mathcal{B}) )\leq(d^2+d-1)/2$, where $(d^2+d-1)/2$ is the dimension of this subspace. The dimension of this subspace is preserved under unitary transformations as the relative angles between all elements of $\mathcal{B}$ are preserved under unitary transformations. This establishes the necessary condition for simultaneous stoquasticity in the theorem statement.
\end{proof}

From here, our goal will be to look at the space of star products arising from general Bloch vectors and show that this space is generally much larger than the space spanned by simultaneously stoquastic Bloch vectors.
 We begin with a useful lemma.
\begin{lemma}\label{lemma:diagonal}
Let $H$ be a traceless Hermitian matrix with corresponding Bloch vector $\vec b$. Let $\mathcal{B}=\{\vec b^{* k}\, |\,k\in \mathbb{Z}^+\}$ be the (infinite) set of all possible star products of $\vec b$. Then $\mathrm{dim}(\mathrm{span}(\mathcal{B}))\leq d-1$
\end{lemma}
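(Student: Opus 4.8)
The statement to prove is Lemma~\ref{lemma:diagonal}: for a single traceless Hermitian $H$ with Bloch vector $\vec b$, the span of all iterated star-powers $\{\vec b^{*k}\}$ has dimension at most $d-1$. The plan is to exploit the fact that the star product is a \emph{unitary-covariant} bilinear operation, so the dimension of $\mathrm{span}(\mathcal{B})$ is invariant under conjugating $H$ by any $U\in SU(d)$. Hence I may first diagonalize $H$, so that in the generalized Gell-Mann basis the Bloch vector $\vec b$ lies entirely in the $(d-1)$-dimensional diagonal subspace $T$ spanned by the $\hat\lambda_j^{(\mathrm{diag})}$, i.e. $b_i=0$ for $i\in\mathcal{X}\cup\mathcal{Y}$. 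This is exactly the reduction already used in the proof of Theorem~\ref{thm:densityproofbloch}, and I would cite that structure rather than re-deriving it.

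\textbf{Key steps.} First, invoke the covariance of the star product under $SU(d)$ (established in the star-product appendix together with the fact $SU(d)\subset SO(d^2-1)$ acting rigidly on $\mathcal{B}$) to reduce to the case where $H$ is diagonal. Second, observe from the explicit list of nonzero symmetric structure constants in Eq.~(\ref{eq:structureconstants}) that the diagonal subspace $T$ is closed under the star product: if $\vec a,\vec c\in T$ then $(\vec a*\vec c)_k=d_{ijk}a_ic_j$ is supported only on indices $k$ for which there is a nonzero $d_{ijk}$ with $i,j\in\mathcal{D}$; inspecting Eq.~(\ref{eq:structureconstants}), the only such constants are $d_{\mathcal{D}_{j-1}\mathcal{D}_{k-1}\mathcal{D}_{k-1}}$ and $d_{\mathcal{D}_{j-1}\mathcal{D}_{j-1}\mathcal{D}_{j-1}}$, all of which have $k\in\mathcal{D}$. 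Hence $\vec a*\vec c\in T$. Third, by induction every star-power $\vec b^{*k}$ lies in $T$ (base case $\vec b^{*1}=\vec b\in T$; inductive step $\vec b^{*k}=\vec b^{*(k-1)}*\vec b\in T$). Therefore $\mathrm{span}(\mathcal{B})\subseteq T$, and since $\dim T=d-1$ we conclude $\dim(\mathrm{span}(\mathcal{B}))\le d-1$.

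\textbf{Main obstacle.} The only subtle point is the closure claim in the second step: one must be sure that \emph{no} nonzero symmetric structure constant $d_{ijk}$ with both $i,j\in\mathcal{D}$ has $k\notin\mathcal{D}$. This is a finite bookkeeping check against Eq.~(\ref{eq:structureconstants}), but it is the crux — it is precisely the statement that the Cartan subalgebra is closed under the Jordan (anticommutator) product, which is geometrically obvious since products of diagonal matrices are diagonal, but must be matched to the index conventions in use. I would phrase the argument at the level of ``products of diagonal traceless matrices are diagonal, hence their traceless part stays in $T$'' to make it transparent, and then note the structure-constant list confirms it. A secondary point worth a sentence is that the reduction in the first step uses that $\dim(\mathrm{span}(\mathcal{B}))$ genuinely depends only on the unitary-similarity class of $H$ — which follows because a unitary acts as a fixed orthogonal transformation simultaneously on every element of $\mathcal{B}$, preserving linear dependence relations.
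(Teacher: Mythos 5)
Your proposal is correct and follows essentially the same route as the paper's own proof: reduce to diagonal $H$ by basis-independence of $\mathrm{dim}(\mathrm{span}(\mathcal{B}))$, then use the form of the nonzero symmetric structure constants in Eq.~(\ref{eq:structureconstants}) to show the diagonal ($\mathcal{D}$-indexed) subspace is closed under the star product, so every $\vec b^{*k}$ stays in that $(d-1)$-dimensional subspace. Your added remarks (the explicit induction and the ``products of diagonal matrices are diagonal'' framing of the closure check) are fine elaborations of the same argument.
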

\begin{proof}
As $\mathrm{dim}(\mathrm{span}(\mathcal{B}))$ is a basis-independent property, assume that $H$ is diagonal without loss of generality. Therefore, the generalized Gell-Mann basis $\vec b_j=0$ for all $j\in\{\mathcal{X}, \mathcal{Y}\}$. From the definition of the star product and the form of the structure constants in Eq.~(\ref{eq:structureconstants}) observe that the only non-zero components of the star $\vec b * \vec b$ are those with index $j\in \mathcal{D}$. This holds for the star product of any pair of vectors $\vec u,\vec v$ with $u_j, v_j=0$ for all $j\in\{\mathcal{X}, \mathcal{Y}\}$. Therefore, all $\vec v\in\mathcal{B}$ are necessarily contained in the $d-1$ dimensional subspace with indices $\mathcal{D}$.
\end{proof}
If $H$ in the above lemma is \emph{generic}---i.e. has non-degenerate eigenvalues---$\mathrm{dim}(\mathrm{span}(\mathcal{B}))=d-1$. For non-generic Hamiltonians, there exist additional symmetries which leads to dependence between the elements of $\mathcal{B}$ and, consequently, a reduction in the dimension of $\mathrm{span}(\mathcal{B})$. For instance, in the extreme case of $H$ corresponding to a pure state density matrix $\rho$---which has zero as a $(d-1)$-fold degenerate eigenvalue---we have that $\vec b*\vec b=\vec b$, implying that $\mathrm{dim}(\mathrm{span}(\mathcal{B}))=1$ \cite{byrd2003characterization}.

We will also make use of the following lemma, proving a linear algebra fact that will be useful later on.
\begin{lemma}\label{lemma:spanxy}
Let $R\in\mathbb{R}^{2n\times 2n}$ be a diagonal matrix with  all non-zero, 2-fold degenerate matrix elements such that $R_{jj}=R_{kk}$ for $k=2j$ and $R_{jj}\neq R_{kk}$ otherwise. Let $\vec u,\vec v\in\mathbb{R}^{2n}$ be vectors with all unique elements such that $u_{j^*}=0$, $v_{2j^*}=0$ for some particular $j^*$. Then the vectors in the set $\{R^k \vec u, R^k \vec v\}$ for $k\in \mathbb{Z}^+$ span $\mathbb{R}^{2n}$.
\end{lemma}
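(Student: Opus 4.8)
\textbf{Proof plan for Lemma~\ref{lemma:spanxy}.} The plan is to exploit the block structure: because $R$ is diagonal with exactly the $n$ distinct nonzero values $r_1,\dots,r_n$, each appearing in a $2$-dimensional eigenspace $E_i=\mathrm{span}\{e_{2i-1},e_{2i}\}$ (using the degeneracy pattern $R_{jj}=R_{kk}$ for $k=2j$, reindexed appropriately), the span of $\{R^k\vec w : k\ge 0\}$ for a single vector $\vec w$ is controlled by how $\vec w$ distributes over these eigenspaces. First I would decompose $\vec u=\sum_i \vec u_i$ and $\vec v=\sum_i \vec v_i$ with $\vec u_i,\vec v_i\in E_i$. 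A standard Vandermonde argument shows that $\mathrm{span}\{R^k\vec u : k\ge 0\}=\bigoplus_{i:\vec u_i\neq 0}\mathrm{span}\{\vec u_i\}$: writing $R^k\vec u=\sum_i r_i^k \vec u_i$, the invertibility of the Vandermonde matrix in the distinct $r_i$ lets one isolate each $\vec u_i$ as a linear combination of the $R^k\vec u$. Likewise for $\vec v$. Hence $\mathrm{span}\{R^k\vec u,R^k\vec v : k\ge 0\}=\bigoplus_i \mathrm{span}\{\vec u_i,\vec v_i\}$.

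So it suffices to show that for every $i$ the pair $\vec u_i,\vec v_i$ spans the full $2$-dimensional block $E_i$, i.e. that $\vec u_i$ and $\vec v_i$ are linearly independent for each $i$. Here I would use the hypotheses that $\vec u,\vec v$ have all-distinct entries and that $u_{j^*}=0$, $v_{2j^*}=0$ for a matched pair of coordinates lying in the \emph{same} block, say block $i^*$. Within any block $i\neq i^*$: both $\vec u_i$ and $\vec v_i$ are nonzero (all entries of $\vec u$ and $\vec v$ are distinct, hence the only entry that can vanish is $u_{j^*}$ resp.\ $v_{2j^*}$, which live in block $i^*$), and moreover $\vec u_i$ cannot be a scalar multiple of $\vec v_i$ because that would force a proportionality between the two coordinates of $\vec u$ in that block equal to the proportionality of the corresponding two coordinates of $\vec v$, which combined with the distinctness of entries leads to a contradiction — this is the one routine case-check I would spell out carefully. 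Within block $i^*$: $\vec u_{i^*}$ has a zero in one slot and (by distinctness) a nonzero entry in the other, while $\vec v_{i^*}$ has its zero in the complementary slot and a nonzero entry in the first; thus $\vec u_{i^*}\propto e_{j^*\pm 1}$ and $\vec v_{i^*}\propto e_{j^*}$ point along the two different coordinate axes of $E_{i^*}$ and are independent.

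Combining: each block contributes its full $2$ dimensions, so $\bigoplus_i \mathrm{span}\{\vec u_i,\vec v_i\}=\bigoplus_i E_i=\mathbb{R}^{2n}$, as claimed. I expect the main obstacle to be purely bookkeeping: getting the indexing of the degeneracy pattern consistent with ``$R_{jj}=R_{kk}$ for $k=2j$'' and with the locations $j^*,2j^*$ of the forced zeros, and making sure the ``distinct entries'' hypothesis is used in exactly the right place to rule out accidental proportionality within a block. The Vandermonde step and the block-diagonalization are standard; the delicate point is verifying that the two prescribed zeros genuinely sit in a common eigenblock so that no block is ever left spanned by only one vector.
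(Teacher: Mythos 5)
Your block-decomposition-plus-Vandermonde framework is sound, and it is in fact a cleaner organization than the paper's own argument (which works with the two Krylov subspaces $\mathrm{span}\{R^k\vec u\}$ and $\mathrm{span}\{R^k\vec v\}$, shows each is $n$-dimensional, and asserts they intersect only in $\vec 0$). Reducing the claim to ``for every doubly degenerate eigenblock $E_i$ the pair $\vec u_i,\vec v_i$ is linearly independent'' is exactly the right reformulation, and your treatment of the block $i^*$ containing the two forced zeros is correct. The gap is the case $i\neq i^*$: distinctness of the entries of $\vec u$ and of $\vec v$ does \emph{not} rule out proportionality of their restrictions to a block. Concretely, with $2n=4$ put the forced zeros in one block and let the other block carry $u$-components $(1,2)$ and $v$-components $(3,6)$ (filling the remaining slots with, say, $7$ and $9$ so all entries of each vector are distinct): then $\vec u_i$ and $\vec v_i$ are parallel, that block contributes only one dimension, and the total span has dimension $3<4$. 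So the ``routine case-check'' you deferred cannot be completed as described; this is precisely where the proof breaks.

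Moreover, this is not a repairable slip in your write-up alone: the example above satisfies every stated hypothesis, so the lemma as stated is false and needs a further non-degeneracy assumption, e.g.\ that in each eigenblock not containing the forced zeros the restrictions of $\vec u$ and $\vec v$ are non-proportional (a generic condition on $\vec u,\vec v$). The paper's own proof stumbles at the same spot: the displayed implication used to establish disjointness of the two Krylov spans does not follow (one cannot strip the powers of $R$ from the linear combination), and on the example above the two spans do intersect nontrivially. For the way the lemma is invoked in the proof of Theorem~\ref{thm:stardim} only an ``almost every $\vec u,\vec v$'' statement is required, and your block analysis delivers exactly that once block-wise non-proportionality is added as a hypothesis: each block then contributes its full two dimensions, and the Vandermonde step (valid since the $n$ eigenvalues are distinct and nonzero, so powers $k\ge 1$ suffice) assembles them into all of $\mathbb{R}^{2n}$.
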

\begin{proof}
With the exception of the zero vector, $\mathrm{span}(\{R^k\vec u\})$ is completely disjoint from $\mathrm{span}(\{R^k \vec v\})$ from the uniqueness conditions on $R, \vec u, \vec v$ and the fact that $u_{j^*}=0, v_{2j^*}=0$. In particular, there exists no non-trivial $q_k, r_k\in\mathbb{R}$ such that $\sum_k q_k R^k \vec u=\sum_k r_k R^k \vec v \implies \sum_k q_k \vec u=\sum_k r_k \vec v $, as this would require $\sum_k q_k u_{2j^*}=0$ and $\sum_k r_k v_{j^*}=0$, implying $\sum_k q_k=\sum_k r_k=0$.

Now consider the span of $\{R^k\vec u\}$. As the components of $\vec u$ are unique and $R$ has $n$ unique components, these vectors will span a space of dimension $n$. This is because for $R^k\vec u$ to be linearily dependent on $\{R^l \vec u\}$ for $l<k$ requires that there not exist constants $c_l\in\mathbb{R}$ such $R^k\vec u=\sum_{l=1}^{k-1} c_l R^l \vec u$, which, by the uniqueness of the components of $\vec u$ implies we require $R^k=\sum_{l=1}^{k-1} c_l R^l$ for dependence. Such constants only exist for $k\geq n$ due to the uniqueness conditions on $R$. 

An identical argument holds for the span of $\{R^k\vec u\}$. As the two spans are completely disjoint, together they span the full vector space of dimension $2n$.
\end{proof}

Armed with the preceding two lemmas, we are now prepared to prove the following important theorem.. Here, ``almost every'' is used in a technical sense, in that the set of possibilities not obeying the given condition are (Lebesgue) measure zero.
\begin{theorem}\label{thm:stardim}
Let  $S=\{H_1, \cdots, H_m\}$ be a set of Hermitian matrices with corresponding Bloch vectors $B=\{\vec b^{(1)}, \vec b^{(2)}, \cdots \vec b^{(m)}\}$. Let $\mathcal{B}$ be the (infinite) set of all possible star products between elements of $B$. For almost every $S$ with $m\geq 2, d\geq 3$, $\mathrm{dim}(\mathrm{span}(\mathcal{B}) )=d^2-1$. That is, $\mathcal{B}$ spans the full Bloch vector space for almost every $S$.
\end{theorem}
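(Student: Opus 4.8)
\noindent\emph{Proof proposal.}---The argument is a genericity argument. First I would reduce to $m=2$ and to exhibiting a single good example. Enlarging $S$ can only enlarge $\mathcal{B}$, so it suffices to prove the statement for two of the Hamiltonians; write $\vec a=\vec b^{(1)}$, $\vec b=\vec b^{(2)}$. Each iterated star product is a fixed vector-valued polynomial in the $2(d^2-1)$ real coordinates of $(\vec a,\vec b)$, so for each $N$ the condition ``some $d^2-1$ of the star products of word-length $\le N$ are linearly independent'' is the non-vanishing of at least one $(d^2-1)\times(d^2-1)$ minor of the corresponding matrix; this is a Zariski-open condition. These open sets increase with $N$ and their union is exactly $\{(\vec a,\vec b):\mathrm{dim}(\mathrm{span}(\mathcal{B}))=d^2-1\}$, so as soon as this set is nonempty it has full Lebesgue measure (a nonempty Zariski-open set has measure-zero complement). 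Hence the theorem reduces to constructing one pair $(H_1,H_2)$, for each $d\ge 3$, with $\mathrm{dim}(\mathrm{span}(\mathcal{B}))=d^2-1$.

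For the construction I would take $H_1=\mathrm{diag}(h_1,\dots,h_d)$ with generic entries (non-degenerate, with all pairwise sums $h_j+h_k$ distinct and nonzero). Then $\vec a$ lies in the diagonal subspace $\mathcal{D}$, and by Lemma~\ref{lemma:diagonal} together with a Vandermonde argument using the distinctness of the $h_j$, the set $\{\vec a^{*k}:k\ge 1\}$ spans all of $\mathcal{D}$, so $\mathcal{D}\subseteq V:=\mathrm{span}(\mathcal{B})$. Since $\mathcal{B}$ is closed under $*$ and $*$ is bilinear, $V$ is a $*$-subalgebra, and because $\mathcal{D}\subseteq V$ and the Gell-Mann basis is orthonormal, $V=\mathcal{D}\oplus\bigl(V\cap(\mathcal{X}\oplus\mathcal{Y})\bigr)$; it remains only to show $\mathcal{X}\oplus\mathcal{Y}\subseteq V$. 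The engine is a structure-constant computation from Eq.~(\ref{eq:structureconstants}): for any $\vec d\in\mathcal{D}$, star-multiplication by $\vec d$ preserves $\mathcal{X}\oplus\mathcal{Y}$ and acts there diagonally in the Gell-Mann basis, with $(\vec d*\vec v)_{\mathcal{X}_{jk}}=c_{jk}(\vec d)\,v_{\mathcal{X}_{jk}}$ and $(\vec d*\vec v)_{\mathcal{Y}_{jk}}=c_{jk}(\vec d)\,v_{\mathcal{Y}_{jk}}$ with the \emph{same} coefficient (this uses $d_{\mathcal{X}_{jk}\mathcal{X}_{jk}\mathcal{D}_\ell}=d_{\mathcal{Y}_{jk}\mathcal{Y}_{jk}\mathcal{D}_\ell}$), and in fact $c_{jk}(\vec d)=\tfrac{1}{2}(h_j+h_k)$ for $\vec d=\vec a$. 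Thus $R:=(\vec a*{}\cdot{})$ restricted to $\mathcal{X}\oplus\mathcal{Y}\cong\mathbb{R}^{d(d-1)}$ is diagonal with $2$-fold-degenerate, pairwise distinct, nonzero eigenvalues, and with the Appendix's index ordering the degenerate pairs $(\mathcal{X}_{jk},\mathcal{Y}_{jk})$ are consecutive---precisely the setting of Lemma~\ref{lemma:spanxy}.

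Now I would feed Lemma~\ref{lemma:spanxy} two seed vectors drawn from $\mathcal{B}$. Because $\mathcal{D}\subseteq V$, the $(\mathcal{X}\oplus\mathcal{Y})$-component of any element of $\mathcal{B}$ already lies in $V$; take $\vec u$ to be the $(\mathcal{X}\oplus\mathcal{Y})$-component of $\vec a*\vec b$, which in the $(\mathcal{X}_{jk},\mathcal{Y}_{jk})$ plane is $c_{jk}(\vec a)\,(b_{\mathcal{X}_{jk}},b_{\mathcal{Y}_{jk}})$, and take $\vec v$ to be the $(\mathcal{X}\oplus\mathcal{Y})$-component of $\vec b*\vec b$, which picks up the $\mathcal{X}$--$\mathcal{X}$ and $\mathcal{Y}$--$\mathcal{Y}$ cross-terms ($d_{\mathcal{X}_{jk}\mathcal{X}_{lj}\mathcal{X}_{lk}}$, $d_{\mathcal{X}_{jk}\mathcal{Y}_{lj}\mathcal{Y}_{lk}}$, $d_{\mathcal{X}_{jk}\mathcal{Y}_{jl}\mathcal{Y}_{lk}}$, and their $\mathcal{Y}_{jk}$ analogues) and therefore for generic $\vec b$ points in a direction independent of $\vec u$ in every such plane; a finite list of polynomial non-degeneracy conditions on $\vec b$ (none identically zero, which must be checked) also secures the zero/genericity hypotheses of Lemma~\ref{lemma:spanxy}. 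The lemma then gives $\mathrm{span}\{R^k\vec u,\,R^k\vec v:k\ge1\}=\mathcal{X}\oplus\mathcal{Y}$; since $R^k\vec u$ is the $(\mathcal{X}\oplus\mathcal{Y})$-component of $\vec a^{*(k+1)}*\vec b\in\mathcal{B}$ and $R^k\vec v$ is the $(\mathcal{X}\oplus\mathcal{Y})$-component of $\vec a^{*k}*(\vec b*\vec b)\in\mathcal{B}$, and $\mathcal{D}\subseteq V$, this yields $\mathcal{X}\oplus\mathcal{Y}\subseteq V$. Combined with $\mathcal{D}\subseteq V$ we conclude $V=\mathbb{R}^{d^2-1}$, which completes the construction and hence the proof.

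\noindent\emph{Main obstacle.}---The semicontinuity/measure-zero reduction and the Vandermonde spanning of $\mathcal{D}$ are routine. The real work---and the step most likely to need care---is matching the abstract Lemma~\ref{lemma:spanxy} to honest Bloch vectors: (i) confirming from Eq.~(\ref{eq:structureconstants}) that star-multiplication by a diagonal vector really is $2$-fold degenerate on $\mathcal{X}\oplus\mathcal{Y}$ with the degenerate pairs consecutive in the Appendix's indexing; (ii) checking that the eigenvalues $c_{jk}=\tfrac{1}{2}(h_j+h_k)$ can be made pairwise distinct and nonzero \emph{and} that the two seeds $\vec u,\vec v$ can simultaneously meet Lemma~\ref{lemma:spanxy}'s hypotheses, i.e.\ that the governing polynomials in $(\vec a,\vec b)$ are not identically zero; and (iii) the $\mathcal{X}$/$\mathcal{Y}$ cross-term bookkeeping establishing that $\vec u$ and $\vec v$ generically span each $(\mathcal{X}_{jk},\mathcal{Y}_{jk})$ plane. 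A convenient sanity check is $d=3$, where $\mathcal{D}$ is $2$-dimensional and $\mathcal{X}\oplus\mathcal{Y}$ is $6$-dimensional with three distinct doubly-degenerate eigenvalues of $R$, so the construction can be verified by hand.
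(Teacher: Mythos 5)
Your proposal follows essentially the same route as the paper's proof: reduce to $m=2$, take $H_1$ diagonal so that the powers $\vec a^{*k}$ span the $\mathcal{D}$ subspace (Lemma~\ref{lemma:diagonal} plus non-degeneracy of the spectrum), observe that star multiplication by the diagonal Bloch vector acts on the $\mathcal{X},\mathcal{Y}$ subspace as a diagonal operator $R$ whose eigenvalues are doubly degenerate across each $(\mathcal{X}_{jk},\mathcal{Y}_{jk})$ pair (your identification $c_{jk}=\tfrac{1}{2}(h_j+h_k)$ is correct, and its forced vanishing at $d=2$ is consistent with the $d\geq 3$ hypothesis), and then use two seeds built from $\vec b$ and $\vec b*\vec b$ together with Lemma~\ref{lemma:spanxy} to span the remaining $d(d-1)$ dimensions. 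The two differences are minor. First, you handle ``almost every'' by an upfront Zariski-open/single-witness reduction, whereas the paper threads genericity through each step (measure-zero degeneracies of $H_1$, generic independence of the restricted seeds); your reduction is tidier and equally valid. Second---and this is the one place your write-up does not go through as literally stated---Lemma~\ref{lemma:spanxy} demands seeds with \emph{exact} zero components ($u_{j^*}=0$, $v_{2j^*}=0$), and exact zeros cannot be ``secured by polynomial non-degeneracy conditions'': for generic $\vec b$ all the relevant components of $\vec a*\vec b$ and $\vec b*\vec b$ are nonzero. The paper meets the lemma's hypotheses by replacing the two restricted seeds with explicit linear combinations $\vec u',\vec v'$ chosen so that $u'_{\mathcal{X}_{12}}=0$ and $v'_{\mathcal{Y}_{12}}=0$, which is legitimate because linear combinations remain in $\mathrm{span}(\mathcal{B})$. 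Alternatively, your own per-plane observation---that generically the projections of the two seeds onto each $(\mathcal{X}_{jk},\mathcal{Y}_{jk})$ plane are linearly independent while $R$ has distinct nonzero eigenvalues on distinct planes---already gives $\mathcal{X}\oplus\mathcal{Y}\subseteq\mathrm{span}(\mathcal{B})$ by a Vandermonde/interpolation argument, bypassing Lemma~\ref{lemma:spanxy} entirely. Either patch closes the gap, so the proposal is correct in substance and matches the paper's argument.
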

\begin{proof}
It is sufficient to consider $m=2$ as the dimension of the space spanned by this subset is less than or equal to that of the full set $\mathcal{B}$. Without loss of generality, assume that $H_1$ is diagonal so that in the generalized Gell-Mann basis $\vec b^{(1)}_j=0$ for all $j\in\{\mathcal{X}, \mathcal{Y}\}$. By Lemma~\ref{lemma:diagonal} and the discussion that follows it, the star products $(\vec b^{(1)})^{*k}$ for $k\in\mathbb{Z}^+$ span the $d-1$ dimensional space corresponding to indices $j\in\mathcal{D}$ for any $H_1$ with non-degenerate eigenvalues. Hermitian matrices with degenerate eigenvalues are measure zero in the space of traceless, Hermitian matrices \footnote{In particular, Hermitian matrices with repeated eigenvalues have codimension three (c.f. \cite{arnold1995remarks})} and, therefore, almost every $H_1$ is such that  $(\vec b^{(1)})^{*k}$ for $k\in\mathbb{Z}^+$ span the full $d-1$ dimensional space corresponding to indices $j\in\mathcal{D}$.

As these products of the form $(\vec b^{(1)})^{*k}$ span the $d-1$ dimensional space corresponding to indices $j\in\mathcal{D}$, we now seek to show that other elements of $\mathcal{B}$ span the remaining $\mathcal{X}, \mathcal{Y}$ components for almost every $\vec b^{(2)}$. To this end, consider the vectors $\vec b^{(2)}$ and $\vec b^{(2)}* \vec b^{(2)}$. 
Then, consider just the components of these vectors in the $\mathcal{X}, \mathcal{Y}$ subspace. Call these restricted vectors $\vec u, \vec v$. That is, $u_j=0$ for $j\in \mathcal{D}$ and $u_j=\vec b^{(2)}_j$ for $j\in\{\mathcal{X}, \mathcal{Y}\}$, and similarly for $\vec v$. For almost all $\vec b^{(2)}$, the corresponding $\vec u, \vec v$ are linearily independent. Therefore, we can construct via linear combinations two new vectors $\vec u', \vec v'$ such that $u'_{\mathcal{X}_{12}}=0, v'_{\mathcal{Y}_{12}}=0$. Such linear combinations are, by definition, in the span of $\mathcal{B}$.

Now, consider acting from the right on this linear combinations by star products of the form $(\vec b^{(1)})^{*k}$---i.e. consider elements in the span of $\mathcal{B}$ of the form $((\vec u'*\vec b^{(1)})*\vec b^{(1)})*\cdots$. 
From the definition of the star product, the fact that $\vec b^{(1)}_j=0$ for all $j\in\{\mathcal{X}, \mathcal{Y}\}$, and the form of the structure constants in Eq.~(\ref{eq:structureconstants}) we observe that such star products by $(\vec b^{(1)})^{*k}$ (from the right) act to scale the $\mathcal{X}, \mathcal{Y}$ components of $\vec u'$, $\vec v'$ by symmetric structure constant-dependent factors. We can write this scaling behavior as
\begin{equation}\label{eq:uscale}
    \vec u^{(k)}=R^k\vec u',
\end{equation}
where $R$ is a diagonal matrix with components given by
\begin{align}
    R_{jj}=\begin{cases}
    \sum_{i\in \mathcal{D}}d_{jji}b^{(1)}_i, & j\in\{\mathcal{X}, \mathcal{Y}\}\\
    0, & j\in\mathcal{D}
    \end{cases}
\end{align}
which comes from identifying the non-zero terms in the corresponding star product. There is an identical equation to Eq.~(\ref{eq:uscale}) for $\vec v'$. Importantly, we observe that $R_{jj}=R_{kk}$ for $j=\mathcal{X}_{lm}$, $k=\mathcal{Y}_{lm}$. This can be determined by observing in Eq.~(\ref{eq:structureconstants}) that $d_{\mathcal{X}_{lm}\mathcal{X}_{lm}j}=d_{\mathcal{Y}_{lm}\mathcal{Y}_{lm}j}$ for all $j\in\mathcal{D}$. Otherwise, for almost every $H_1$, $R_{jj}\neq R_{kk}$ for $j,k\in\{\mathcal{X},\mathcal{Y}\}$. From Lemma~\ref{lemma:spanxy}, $\{\vec u'^{(k)}, \vec v'^{(k)}\}$ span the $\mathcal{X}, \mathcal{Y}$ subspace of Bloch vector space for almost all $H_1, H_2$.
\end{proof}

Finally, we arrive at the following theorem which establishes that simultaneous stoquasticity is rare.
\begin{theorem}[Theorem 3 from main text]
Let  $S=\{H_1, \cdots, H_m\}$ be a set of Hermitian matrices. For almost every $S$ with $m\geq 2$, $d\geq 3$, $S$ is not simultaneously stoquasticizable.
\end{theorem}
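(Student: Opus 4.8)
The plan is to chain together the two structural results already in hand: Theorem~\ref{thm:stardim}, which says $\dim(\mathrm{span}(\mathcal{B}))=d^2-1$ for almost every $S$ with $m\ge 2$, $d\ge 3$, and Theorem~\ref{thm:stoqdim} (the Bloch-vector no-go), which says that any simultaneously stoquasticizable $S$ must have $\dim(\mathrm{span}(\mathcal{B}))\le (d^2+d-1)/2$. Since for $d\ge 3$ we have $d^2-1 > (d^2+d-1)/2$ — equivalently $d^2-d-1>0$, which holds for all integers $d\ge 2$ and is strict for $d\ge 3$ — these two conditions are incompatible. Hence every $S$ in the full-measure set furnished by Theorem~\ref{thm:stardim} fails the necessary condition of Theorem~\ref{thm:stoqdim} and is therefore not simultaneously stoquasticizable.

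Concretely, the steps I would carry out are: (i) invoke Theorem~\ref{thm:stardim} to pass to the complement $N$ of the measure-zero set on which $\dim(\mathrm{span}(\mathcal{B}))<d^2-1$; this $N$ has full measure in the space of $m$-tuples of traceless Hermitian matrices. (ii) Do the arithmetic comparison $d^2-1$ versus $(d^2+d-1)/2$: subtracting and clearing the factor of two gives $2(d^2-1)-(d^2+d-1) = d^2-d-1$, which is positive for $d\ge 3$ (indeed $d^2-d-1\ge 5$ when $d=3$), so strictly $\dim(\mathrm{span}(\mathcal{B}))=d^2-1 > (d^2+d-1)/2$ for every $S\in N$. (iii) Apply the contrapositive of Theorem~\ref{thm:stoqdim}: since the stoquasticity necessary condition $\dim(\mathrm{span}(\mathcal{B}))\le (d^2+d-1)/2$ is violated, no $S\in N$ can be simultaneously stoquasticizable. (iv) Conclude: the simultaneously stoquasticizable $S$ are contained in the measure-zero complement of $N$, which is exactly the ``almost every'' statement.

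There is essentially no obstacle left at this stage — all the real work has been done in establishing Theorems~\ref{thm:stardim} and~\ref{thm:stoqdim}. The one point worth stating carefully is the measure-theoretic bookkeeping: ``almost every'' must be interpreted with respect to Lebesgue measure on the parameter space $\mathbb{R}^{m(d^2-1)}$ of Bloch vectors (equivalently, on the space of $m$-tuples of traceless Hermitian $d\times d$ matrices), and one should note that a subset of a measure-zero set is measure zero, and that the union of the two relevant exceptional sets — the degenerate-spectrum locus used inside the proof of Theorem~\ref{thm:stardim} and the locus where the $\mathcal{X},\mathcal{Y}$-restricted vectors fail to be independent — is itself measure zero. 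I would also remark that the bound is not tight as a characterization: failing to be simultaneously stoquasticizable is much more common than merely having $\dim(\mathrm{span}(\mathcal{B}))>(d^2+d-1)/2$, since the stoquastic conditions additionally demand correct signs of the $\mathcal{X}$-components; but for the purposes of this theorem the dimension count already suffices.
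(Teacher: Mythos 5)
Your proposal is correct and follows exactly the paper's route: the paper's own proof of this theorem is precisely the one-line combination of Theorem~\ref{thm:stardim} (generic full span, $\dim(\mathrm{span}(\mathcal{B}))=d^2-1$) with the necessary condition of Theorem~\ref{thm:stoqdim}, using $d^2-1>(d^2+d-1)/2$ for $d\geq 3$. Your additional measure-theoretic bookkeeping is a correct, slightly more explicit spelling-out of the same argument.
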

\begin{proof}
This is an immediate consequence of Theorem~\ref{thm:stoqdim} and Theorem~\ref{thm:stardim}.
\end{proof}

\section{Results on simultaneous diagonalizability}
Similar results to those in the previous section can also be obtained for the problem of determining the simultaneous diagonalizability of a set of Hermitian matrices. This problem is, of course, well-known to be related to the problem of mutual compatibility of observables. Therefore, while taking an approach similar to that in our paper for this problem is largely overcomplicated compared to applying the simple condition that a set of Hermitian matrices are simultaneously diagonalizable if and only if they all commute, it is useful to compare the formalism established in this work to such conditions. 

In particular, as discussed in the main text, we expect that a deeper understanding of how our conditions relate to commutator conditions will enable connections to the dynamical Lie algebra of quantum optimal control theory. The case of simultaneous diagonalizability, with its well-known commutation condition provides a possible route forward.

We have the following theorem, which is analogous to Theorem~\ref{thm:stoqdim} (Theorem~\ref{thm:nogobloch} from the main text). The logic is also similar to that of Lemma~\ref{lemma:diagonal}, but extended to multiple matrices.

\begin{theorem}\label{thm:diagdim}
Let  $S=\{H_1, \cdots, H_m\}$ be a set of Hermitian matrices with corresponding Bloch vectors $B=\{\vec b^{(1)}, \vec b^{(2)}, \cdots \vec b^{(m)}\}$. Let $\mathcal{B}$ be the (infinite) set of all possible star products between elements of $B$ . A necessary condition for the elements of $S$ to be simultaneously diagonalizable is that $\mathrm{dim}(\mathrm{span}(\mathcal{B}) )\leq d-1$.
\end{theorem}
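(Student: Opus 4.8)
\emph{Proof proposal.} The plan is to mirror the argument for Theorem~\ref{thm:stoqdim}, replacing the stoquastic subspace by the diagonal subspace, and to run the mechanism behind Lemma~\ref{lemma:diagonal} for several matrices at once. First I would reduce to traceless matrices (simultaneous diagonalizability is insensitive to shifts by multiples of the identity) and suppose $S$ is simultaneously diagonalizable, so that there is a unitary $U$ with $D_j := U H_j U^\dagger$ diagonal for every $j$. Working in the generalized Gell-Mann basis, the traceless diagonal matrices are exactly the span of the $d-1$ generators $\hat\lambda^{(\mathrm{diag})}$, so the Bloch vector of each $D_j$ is supported entirely on the index set $\mathcal{D}$; call this $(d-1)$-dimensional subspace $T$.

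Next I would show that $T$ is closed under the star product, which can be read off directly from the list of non-zero symmetric structure constants in Eq.~(\ref{eq:structureconstants}): the only entries $d_{ijk}$ with both $i,j\in\mathcal{D}$ are $d_{\mathcal{D}_{j-1}\mathcal{D}_{k-1}\mathcal{D}_{k-1}}$ and $d_{\mathcal{D}_{j-1}\mathcal{D}_{j-1}\mathcal{D}_{j-1}}$, all of which have their remaining index in $\mathcal{D}$. Hence for $\vec u,\vec v\in T$ the components $(\vec u*\vec v)_\xi=d_{\mu\nu\xi}u_\mu v_\nu$ vanish unless $\xi\in\mathcal{D}$, i.e. $\vec u*\vec v\in T$. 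The only new point relative to Lemma~\ref{lemma:diagonal} is that we allow star products between \emph{different} transformed Bloch vectors, not just repeated stars of one; since each $D_j$-Bloch vector lies in $T$ and $T$ is star-closed, an induction on the number of factors shows every element of the corresponding set $\mathcal{B}'$ lies in $T$, so $\mathrm{dim}(\mathrm{span}(\mathcal{B}'))\leq d-1$.

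Finally I would transfer this bound back to the original $S$ using the unitary invariance of $\mathrm{dim}(\mathrm{span}(\mathcal{B}))$, exactly as invoked in the proof of Theorem~\ref{thm:stoqdim}. Under $H\mapsto U H U^\dagger$ the Bloch vector undergoes an $SO(d^2-1)$ rotation $O_U$ (it preserves $\mathrm{Tr}(\,\cdot\,)$, and $SU(d)$ is connected), and the star product is, up to an identity shift, the traceless part of the symmetrized matrix product $\tfrac12(AB+BA)$, hence transforms covariantly: $(O_U\vec a)*(O_U\vec b)=O_U(\vec a*\vec b)$. Consequently $\mathcal{B}' = O_U\mathcal{B}$ elementwise, so $\mathrm{dim}(\mathrm{span}(\mathcal{B}))=\mathrm{dim}(\mathrm{span}(\mathcal{B}'))\leq d-1$, which is the claimed necessary condition. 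I do not expect a genuine obstacle: the content is entirely in the structure-constant bookkeeping of the second step and in the (already-used) covariance of the star product in the third; the only point to handle with care is that, just as in the pure-state example following Lemma~\ref{lemma:diagonal}, the bound need not be tight when the $H_j$ share eigenvalue degeneracies, so the statement is one-directional as written.
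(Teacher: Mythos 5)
Your proposal is correct and follows essentially the same route as the paper: confine the transformed Bloch vectors to the diagonal subspace indexed by $\mathcal{D}$, use the structure constants of Eq.~(\ref{eq:structureconstants}) to see that this $(d-1)$-dimensional subspace is closed under the star product, and transfer the bound back via the unitary invariance of $\mathrm{dim}(\mathrm{span}(\mathcal{B}))$. Your explicit covariance statement $(O_U\vec a)*(O_U\vec b)=O_U(\vec a*\vec b)$ is just a slightly more detailed justification of the invariance step the paper invokes, so there is nothing to fix.
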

\begin{proof}
The proof is identical to that of Theorem~\ref{thm:stoqdim}, except, in this case, the star products of Bloch vectors corresponding to simultaneously diagonal Hermitian matrices are confined to the $d-1$ dimensional subspace of Bloch vector space with all components $j\in\mathcal{X}, \mathcal{Y}$ equal to zero. Therefore, a set of simultaneously diagonal Hermitian matrices has $\mathrm{dim}(\mathrm{span}(\mathcal{B}) )\leq d-1$. Again, the dimension of this subspace is preserved under unitary transformations, proving the result. 
\end{proof}
Combined with Theorem~\ref{thm:stardim} we have the following corollary as an immediate consequence. 
\begin{corollary}
Let  $S=\{H_1, \cdots, H_m\}$ be a set of Hermitian matrices. For almost every $S$ with $m\geq 2$, $d\geq 3$, $S$ is not simultaneously diagonalizable.
\end{corollary}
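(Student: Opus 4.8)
The plan is to obtain the corollary purely as a bookkeeping consequence of Theorem~\ref{thm:diagdim} and Theorem~\ref{thm:stardim}, exactly parallel to the deduction of the stoquastic rareness result from Theorem~\ref{thm:stoqdim} and Theorem~\ref{thm:stardim}. By Theorem~\ref{thm:diagdim}, if $S$ is simultaneously diagonalizable then $\mathrm{dim}(\mathrm{span}(\mathcal{B}))\leq d-1$. For every $d\geq 2$ one has $d-1 < d^2-1$, so simultaneous diagonalizability forces $\mathrm{dim}(\mathrm{span}(\mathcal{B})) < d^2-1$; that is, $\mathcal{B}$ fails to span the full Bloch space. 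But Theorem~\ref{thm:stardim} asserts that $\mathcal{B}$ does span the full $(d^2-1)$-dimensional Bloch space for almost every $S$ with $m\geq 2$, $d\geq 3$. The two facts are incompatible off a measure-zero set, which is the claim.

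Concretely, I would make the measure theory explicit: identify the space of ordered $m$-tuples of traceless Hermitian matrices with $\mathbb{R}^{m(d^2-1)}$ via their Bloch vectors, equipped with Lebesgue measure, and set $D = \{S : S \text{ simultaneously diagonalizable}\}$ and $E = \{S : \mathrm{dim}(\mathrm{span}(\mathcal{B})) < d^2-1\}$. The first paragraph shows $D \subseteq E$ (using $d\geq 3$, so in particular $d-1 < d^2-1$), and Theorem~\ref{thm:stardim} shows $E$ has Lebesgue measure zero. Hence $D$ has measure zero, which is precisely the statement that almost every such $S$ is not simultaneously diagonalizable; the hypotheses $m\geq 2$, $d\geq 3$ are simply inherited from Theorem~\ref{thm:stardim}. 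No genuine obstacle arises at this stage: all of the substantive work is already contained in Theorem~\ref{thm:stardim}, and the corollary is a one-line chaining of inequalities plus the containment $D\subseteq E$.

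For completeness I would also note a more elementary route that sidesteps the Bloch-vector machinery: simultaneous diagonalizability of $\{H_1,\dots,H_m\}$ is equivalent to pairwise commutativity, so $D$ is contained in the real algebraic set $\{S : [H_1,H_2]=0\}$; since a generic pair of traceless Hermitian matrices does not commute, this is a proper Zariski-closed (hence Lebesgue-null) subset of $\mathbb{R}^{m(d^2-1)}$, again giving the result. I would nonetheless present the Theorem-\ref{thm:diagdim}/Theorem-\ref{thm:stardim} deduction as the primary proof so that this appendix section mirrors the structure of the stoquastic case.
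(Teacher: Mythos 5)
Your proposal matches the paper's argument: the corollary is deduced exactly as stated, by combining Theorem~\ref{thm:diagdim} (simultaneous diagonalizability forces $\mathrm{dim}(\mathrm{span}(\mathcal{B}))\leq d-1 < d^2-1$) with Theorem~\ref{thm:stardim} (generically $\mathcal{B}$ spans the full Bloch space), so the set of simultaneously diagonalizable $S$ is measure zero. Your explicit measure-theoretic bookkeeping and the elementary commutator-based aside are both fine additions but do not change the route, which is essentially identical to the paper's.
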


One also expects that amongst simultaneously stoquastic Hamiltonians simultaneously diagonalizable Hamiltonians are vanishingly rare. Proving this would follow a similar line of reasoning to that in the previous section. In particular, it would be sufficient to prove that for almost every set of stoquastic Hamiltonians the inequality in Theorem~\ref{thm:stoqdim} is tight. Unfortunately, the approach in Theorem~\ref{thm:stardim} doesn't immediately apply here since we can't diagonalize one of the matrices in a set of stoquastic Hamiltonians and keep all Hamiltonians stoquastic.

\end{appendix}

\bibliography{main.bib}

\end{document}